\newtheorem{lemma}{Lemma}
\newtheorem{theorem}{Theorem}
\newtheorem{corollary}{Corollary}
\newenvironment{assumptionp}[1]{
  
  \assumptionalt
}{\endassumptionalt}
\newcommand{\norm}[1]{\left\lVert#1\right\rVert}
\DeclareMathOperator*{\argmin}{arg\,min} % thin space, limits underneath in displays
\begin{document}

\title{Uniform Convergence Results for the Local Linear Regression Estimation of the Conditional Distribution\thanks{This article is a revised version of the second chapter of the author's doctoral dissertation at UC San Diego. The author is indebted to the Editor, an anonymous Associate Editor, and an anonymous referee for very constructive and detailed suggestions that greatly improved the paper.}}

\author{Haitian Xie\thanks{%
Email: xht@gsm.pku.edu.cn. Address: 5 Yiheyuan Road, Haidian District, Beijing, China, 100871.} \\
Guanghua School of Management\\
Peking University}

\date{\textbf{\today}}

\maketitle
\thispagestyle{empty}
\vspace{-2em}

\begin{abstract} \doublespacing
	This paper examines the local linear regression (LLR) estimate of the conditional distribution function $F(y|x)$. We derive three uniform convergence results: the uniform bias expansion, the uniform convergence rate, and the uniform asymptotic linear representation. The uniformity in the above results is with respect to both $x$ and $y$ and therefore has not previously been addressed in the literature on local polynomial regression. Such uniform convergence results are especially useful when the conditional distribution estimator is the first stage of a semiparametric estimator. We demonstrate the usefulness of these uniform results with two examples: the stochastic equicontinuity condition in $y$, and the estimation of the integrated conditional distribution function.

	\vspace{2em}

	\noindent \textbf{Keywords:} Uniform Bias Expansion, Uniform Convergence Rate, Uniform Asymptotic Linear Representation.
\end{abstract}

\section{Introduction}

This paper studies the nonparametric estimation of the conditional distribution function. The analysis concerns a random variable $Y \in \mathbb{R}$ and a random vector of covariates $X \in \mathbb{R}^d$. The conditional distribution function of $Y$ given $X=x$ is denoted by $F(\cdot|x)$, that is, 
\begin{align*}
	F(y|x) = \mathbb{P}(Y \leq y | X = x), y \in \mathbb{R}.
\end{align*}
When the conditional distribution function $F(\cdot|\cdot)$ is assumed to be smooth, it is natural to consider using the \textit{local linear regression} (LLR) method to estimate $F$.

The main subject of this study is the uniform convergence of the LLR estimator with respect to both $y$ and $x$. In particular, we derive the uniform bias expansion, characterize the uniform convergence rate, and present the uniform asymptotic linear representation of the estimator. As explained in, for example, \cite{hansen2008uniform} and \cite{kong2010uniform}, these uniform results are often useful for semiparametric estimation based on nonparametrically estimated components. 

The estimation of the conditional distribution is an important area of research. \cite{hansen2004nonparametric} studies the asymptotic properties of both the Nadaraya-Watson (local constant) estimator and the LLR estimator, and obtains point-wise convergence results. It is well-known that the LLR estimator has the better boundary properties of the two, but unlike the Nadaraya-Watson estimator, the LLR estimator is not guaranteed to be a proper distribution function.\footnote{To solve this problem \cite{hall1999methods} propose a weighted Nadaraya-Watson estimator that has the same asymptotic distribution as the LLR estimator, but these weights require extensive computation.}
Recently, \cite{politis2020nonparametric} propose a way to correct the LLR estimator. The conditional distribution estimation is also useful for estimating conditional quantiles. For example, \cite{yu1997thesis} and \cite{yu1998local} first estimate the conditional distribution function and then invert it to obtain the conditional quantile function.

The local polynomial estimators have been studied extensively, but the uniform convergence results for the estimation of $F$ are new to the literature. In the general setup of local polynomial estimators, there is only one regressand, namely, $Y$. However, in the conditional distribution estimation, there is a class of regressands, namely, $\mathbf{1}\{Y \leq y\}, y \in \mathbb{R}$. For example, \cite{masry1996multivariate} establishes the uniform convergence rate for general local polynomial estimators, but the uniformity is with respect to the values of the regressors. Therefore, their results can only be applied to an estimate of $F(y|\cdot)$ for a fixed $y \in \mathbb{R}$. For the same reason, the results in \cite{kong2010uniform} cannot be used to provide a uniform asymptotic linear representation for $y \in \mathbb{R}$. Our paper aims to solve these issues and prove that under suitable conditions, the desired results are uniform with respect to both $y$ and $x$. We make use of the recent discovery by \cite{fan2016multivariate} on the support of the covariates, ensuring that the uniform results are valid over the entire support.

The second contribution of the paper is the presentation of a novel way of proving the uniform convergence rate 
via empirical process theory. This theory was developed by \cite{GINE2001consistency} and \cite{GINE2002rates} and supports the uniform almost sure convergence of the kernel density estimator. In this paper, we simplify their method and make it more accessible to users who are only concerned with the notion of uniform convergence in probability.

The remaining parts of the paper are organized as follows. Section \ref{sec:model} introduces the statistical model and the assumptions. Section \ref{sec:bias} establishes the uniform bias expansion result. Section \ref{sec:stochastic} introduces empirical process theory and uses it to prove the uniform convergence rate. Section \ref{sec:asymptotic-linear} presents the uniform asymptotic linear representation and provides a simple example to illustrate the result. The proofs for the theorems in the main text are found in Appendix \ref{sec:proof}. Appendix \ref{sec:ept} contains some preliminary results for empirical process theory.

\section{Model and Assumptions} \label{sec:model}

Let $\{(Y_i,X_i), 1 \leq i \leq n\}$ be a random sample of $(Y,X)$.
The estimation procedure is described as follows. Let $w$ and $k$ be two kernel functions and $K(v) = \int_{-\infty}^v k(u) du$. Let $h_1 = h_{1n} = o(1)$ and $h_2 = h_{2n} = o(1)$ be two scalar sequences of bandwidths. Let $\bm{r}(u) = (1,u^\top)^\top, u \in \mathbb{R}^d$ and $\bm{e}_0 = (1, 0, \cdots, 0)$ be the first $(d+1)$-dimensional unit vector.
The proposed estimator is $\hat{F} (y|x) = \bm{e}_0^\top \bm{\hat{\beta}}(y,x,h_1,h_2)$, where
\begin{align} \label{eqn:def-beta-hat}
	\bm{\hat{\beta}}(y,x,h_1,h_2) & = \left( \hat{\bm{\beta}}_0(y,x,h_1,h_2),\hat{\bm{\beta}}_1(y,x,h_1,h_2), \cdots, \hat{\bm{\beta}}_d(y,x,h_1,h_2) \right)^\top \nonumber\\
	& = \argmin_{\bm{\beta} \in \mathbb{R}^{d+1}} \sum_{i=1}^n \left( K\left( \frac{y - Y_i}{h_2} \right) - \bm{r}(X_i - x)^\top \bm{\beta}  \right)^2 w \left( \frac{X_i - x}{h_1} \right).
\end{align}
Let $H_1$ be the $(d+1) \times (d+1)$ diagonal matrix with diagonal elements: $(1,h_1,\cdots,h_1)$.
The first-order condition of the above minimization problem gives
\begin{align} \label{eqn:Xihat-upsilonhat}
	H_1 \hat{\bm{\beta}}(y,x,h_1,h_2) = \hat{\Xi}(x,h_1)^{-1} \hat{\bm{\upsilon}}(y,x,h_1,h_2),
\end{align}
where
\begin{align*}
	\hat{\Xi}(x,h_1) & = \frac{1}{nh_1^d } \sum_{i=1}^n \bm{r}\left( \frac{X_i - x}{h_1} \right) \bm{r}\left( \frac{X_i - x}{h_1} \right)^\top w \left( \frac{X_i - x}{h_1} \right), \\
	\hat{\bm{\upsilon}}(y,x,h_1,h_2) & = \frac{1}{nh_1^d } \sum_{i=1}^n \bm{r}\left( \frac{X_i - x}{h_1} \right) K\left( \frac{y - Y_i}{h_2} \right) w \left( \frac{X_i - x}{h_1} \right).
\end{align*}

In the construction of the estimator, we do not use the indicator $\mathbf{1}\{ Y_i \leq y \}$. Instead, we use the smoothed version $K\left( (y - Y_i)/h_2 \right)$, which requires the selection of another bandwidth $h_2$ and additional smoothness assumptions on the conditional distribution function. However, there are several advantages to using the smoothed version. First, the estimator constructed from the indicators is not smooth in $y$. When we believe that the true distribution function is smooth, it is customary to use the smoothed estimator. Second, from the asymptotic perspective, the indicator $\mathbf{1}\{ Y_i \leq y \}$ can be considered to be the limiting case of $K\left( (y - Y_i)/h_2 \right)$ for $h_2=0$. As \cite{hansen2004nonparametric} shows, the asymptotic mean squared error is strictly decreasing for $h_2=0$; hence, there are efficiency gains from smoothing. Third, as the simulation results in \cite{yu1997thesis} and \cite{yu1998local} demonstrate, the estimates are not very sensitive to the value of $h_2$. Lastly, as we show in Section \ref{sec:asymptotic-linear}, the smoothed estimator exhibits a stochastic equicontinuity condition in $y$. This condition is particularly useful when the conditional distribution estimation is an intermediate step in a semiparametric estimation procedure. For example, \cite{chen2003estimation} provide results on using the stochastic equicontinuity condition to derive the asymptotic distribution of two-step semiparametric estimators.

The following assumptions are maintained throughout the paper.

\begin{assumptionp}{\textbf{X}} [Distribution of $X$] \label{ass:X} 
	The support of $X$, denoted by $\mathcal{X}$, is convex and compact. The marginal density $f_X$ is bounded away from zero on $\mathcal{X}$. The restriction of $f_X$ to $\mathcal{X}$ is twice continuously differentiable. There exist $\lambda_0, \lambda_1 \in (0,1]$ such that for any $x \in \mathcal{X}$ and all $\epsilon \in (0,\lambda_0]$, there is $x' \in \mathcal{X}$ satisfying $B(x',\lambda_1 \epsilon) \subset B(x,\epsilon) \cap \mathcal{X}$, where $B(x,\epsilon)$ denotes the ball centered at $x$ with radius $\epsilon$. 
\end{assumptionp}
\begin{assumptionp}{\textbf{Y}} [Conditional distribution of $Y|X$] \label{ass:Y}
	The conditional distribution function $F(y|x)$ restricted to $\mathbb{R} \times \mathcal{X}$ is twice continuously differentiable in $y$ and $x$. Moreover, this second-order derivative of $F$ restricted to $\mathbb{R} \times \mathcal{X}$ is uniformly continuous.
\end{assumptionp}

\begin{assumptionp}{\textbf{K}} [Kernel functions] \label{ass:kernel}
	\ \begin{enumerate} [label = (\roman*)]
		\item The kernel function $w$ is a product kernel, that is, $w(u) = w_1(u_1) w_2(u_2) \cdots w_d(u_d).$ Each $w_\ell$ (1) is a symmetric density function with compact support $[-1,1]$; (2) has its second moment normalized to one, that is, $\int u_\ell^2 w_\ell(u_\ell) du_\ell = 1$; (3) is positive in the interior of the support $(-1,1)$; and (4) is of bounded variation.

		\item The kernel function $k$ (1) is a symmetric density function with a compact support and (2) has its second moment normalized to one, that is, $\int v^2 k(v) dv = 1$.
	\end{enumerate}

\end{assumptionp}

A brief discussion of the above assumptions is in order. Assumption \ref{ass:X} is introduced by \cite{fan2016multivariate} as a regularity condition on the support $\mathcal{X}$. It ensures that there are sufficient observations around every estimation location, including the boundary points. Assumption \ref{ass:Y} imposes smoothness conditions on the conditional distribution function $F$. Under this assumption, the Hessian matrix of $F$ is uniformly continuous on the compact support $\textit{supp}(Y,X)$. Assumption \ref{ass:kernel} contains standard conditions on the kernel functions $k$ and $w$. The bounded variation condition is imposed for the application of empirical process theory.

\section{Uniform Bias Expansion} \label{sec:bias}

We denote the true value of the conditional distribution function and its derivative with respect to $x$ as
\begin{align*}
	\bm{\beta}^*(y,x) = \left( \bm{\beta}^*_0(y,x), \bm{\beta}^*_1(y,x), \cdots, \bm{\beta}_d^*(y,x) \right)^\top = \left( F (y|x), \nabla_x F (y|x)^\top \right)^\top,
\end{align*}
where $\nabla_x F (y|x) = \left( \frac{\partial}{\partial x_1} F (y|x), \cdots, \frac{\partial}{\partial x_d} F (y|x) \right)^\top$ is the gradient of $F(y|x)$ with respect to $x$.
A convenient way to analyze the estimator $\hat{\bm{\beta}}(y,x,h_1,h_2)$ is to consider it as an estimator of the pseudo-true value defined by
\begin{align} \label{eqn:def-pseudo-true-beta}
	\bar{\bm{\beta}}(y,x,h_1,h_2) & = \left( \bar{\bm{\beta}}_0(y,x,h_1,h_2),\bar{\bm{\beta}}_1(y,x,h_1,h_2), \cdots, \bar{\bm{\beta}}_d(y,x,h_1,h_2) \right)^\top \nonumber \\
	& = \argmin_{\bm{\beta} \in \mathbb{R}^{d+1}} \mathbb{E} \left[ \left( K\left( \frac{y - Y}{h_2} \right) - \bm{r}(X - x)^\top \bm{\beta}  \right)^2 w \left( \frac{X - x}{h_1} \right) \right].
\end{align}
This pseudo-true value $\bar{\bm{\beta}}$ is deterministic and converges to the true value $\bm{\beta}^*$ as $n \rightarrow \infty$.\footnote{The terminology ``pseudo-true'' is adopted from \cite{fan2016multivariate}.}
We can break the asymptotic analysis of $\hat{\bm{\beta}}(y,x,h_1,h_2) - \bm{\beta}^*(y,x)$ into two parts:
\begin{align*}
	\hat{\bm{\beta}}(y,x,h_1,h_2) - \bm{\beta}^*(y,x) = \underbrace{\hat{\bm{\beta}}(y,x,h_1,h_2) - \bar{\bm{\beta}}(y,x,h_1,h_2)}_{\text{stochastic term}} + \underbrace{\bar{\bm{\beta}}(y,x,h_1,h_2) - \bm{\beta}^*(y,x)}_{\text{bias term}}.
\end{align*}
In this section, we study the bias term, which is the difference between the pseudo-true value and the true value. The first-order condition of (\ref{eqn:def-pseudo-true-beta}) gives an explicit expression of the pseudo-true value: $H_1 \bar{\bm{\beta}}(y,x,h_1,h_2) = \Xi(x,h_1)^{-1} \bm{\upsilon}(y,x,h_1,h_2)$,
where
\begin{align*}
	\Xi(x,h_1) & = \frac{1}{h_1^d}\mathbb{E} \left[ \bm{r}\left( \frac{X - x}{h_1} \right) \bm{r}\left( \frac{X - x}{h_1} \right)^\top w \left( \frac{X - x}{h_1} \right) \right] , \\
	\bm{\upsilon}(y,x,h_1,h_2) & = \frac{1}{h_1^d} \mathbb{E} \left[ \bm{r}\left( \frac{X - x}{h_1} \right) K\left( \frac{y - Y}{h_2} \right) w \left( \frac{X - x}{h_1} \right) \right].
\end{align*}
Define $\Omega(x,h_1) = \int \bm{r}(u) \bm{r}(u)^\top w(u) \mathbf{1}\{ x+ h_1 u \in \mathcal{X}\} du.$ The following lemma shows that the matrices $\Xi(x,h_1)$ and $\Omega(x,h_1)$ are always bounded and invertible.
\begin{lemma} \label{lm:inverse-matrix-well-defined}
	Under Assumptions \ref{ass:X} and \ref{ass:kernel}, there exists $C > 0$ such that the eigenvalues of $\Xi(x,h_1)$ and $\Omega(x,h_1)$ are in $[1/C,C]$ for all $x \in \mathcal{X}$ and $h_1 \geq 0$ small enough. 
\end{lemma}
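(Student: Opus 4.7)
The plan is to reduce eigenvalue control of $\Xi(x, h_1)$ to that of $\Omega(x, h_1)$ and then exploit the interior-ball property in Assumption \ref{ass:X} to produce a uniform lower bound.

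First I would change variables $u = (X-x)/h_1$ in the expectation defining $\Xi$ to obtain
\[
	\Xi(x, h_1) = \int \bm{r}(u) \bm{r}(u)^\top w(u) f_X(x + h_1 u) \mathbf{1}\{x + h_1 u \in \mathcal{X}\} \, du.
\]
Assumption \ref{ass:X} gives $0 < f_\ast \leq f_X \leq f^\ast < \infty$ on the compact set $\mathcal{X}$, so the eigenvalues of $\Xi(x, h_1)$ are pinched between $f_\ast$ and $f^\ast$ times those of $\Omega(x, h_1)$, and it suffices to bound the eigenvalues of $\Omega$ uniformly. The upper bound is immediate: since $w$ is supported on $[-1, 1]^d$ where $\|\bm{r}(u)\|$ is bounded, the operator norm of $\Omega(x, h_1)$ is at most $\int \|\bm{r}(u)\|^2 w(u) \, du$, independently of $x$ and $h_1$.

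For the lower bound I would invoke the interior-ball condition. For $h_1 \leq \lambda_0$ and $x \in \mathcal{X}$, let $x'$ be the point supplied by Assumption \ref{ass:X} with $B(x', \lambda_1 h_1) \subset B(x, h_1) \cap \mathcal{X}$, and set $z = (x' - x)/h_1$. Nested-ball containment forces $|x' - x| \leq (1 - \lambda_1) h_1$, so $|z| \leq 1 - \lambda_1$ and hence $B(z, \lambda_1) \subset B(0, 1) \subset (-1, 1)^d$. On this ball $w$ is strictly positive by Assumption \ref{ass:kernel}, and $B(z, \lambda_1) \subset \{u : x + h_1 u \in \mathcal{X}\}$ by the defining property of $x'$. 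Therefore, for any unit vector $\bm{a} \in \mathbb{R}^{d+1}$,
\[
	\bm{a}^\top \Omega(x, h_1) \bm{a} \;\geq\; \phi(\bm{a}, z) \;:=\; \int_{B(z, \lambda_1)} (\bm{a}^\top \bm{r}(u))^2 w(u) \, du.
\]

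I would conclude by establishing $\inf \phi > 0$ over the compact set $\{\bm{a} \in \mathbb{R}^{d+1} : \|\bm{a}\| = 1\} \times \overline{B(0, 1 - \lambda_1)}$. Continuity of $\phi$ in $(\bm{a},z)$ follows from dominated convergence (the integrand is uniformly bounded on $[-1,1]^d$), and positivity at each $(\bm{a}, z)$ holds because $u \mapsto \bm{a}^\top \bm{r}(u) = a_0 + a_1 u_1 + \cdots + a_d u_d$ is a nonzero affine polynomial whose zero set has Lebesgue measure zero, while $w > 0$ on $B(z, \lambda_1)$. A continuous positive function on a compact set has strictly positive infimum, yielding the uniform lower bound. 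The main obstacle is boundary behavior of $\mathcal{X}$: near $\partial \mathcal{X}$ the indicator $\mathbf{1}\{x + h_1 u \in \mathcal{X}\}$ can excise a large portion of the kernel support, and Assumption \ref{ass:X} is precisely the condition that prevents degeneracy by keeping a fixed-radius ball strictly inside both the integration domain and the region where $w$ is bounded below.
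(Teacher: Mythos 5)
Your argument is correct and matches the paper's approach: the paper also reduces to showing positivity of the quadratic form $\inf_{z}\min_{\|b\|=1} b^\top\bigl(\int \bm{r}(u)\bm{r}(u)^\top w(u)\mathbf{1}\{u \in B(z,\lambda_1)\}\,du\bigr)b$, using the interior-ball condition from Assumption \textbf{X} to locate $B(z,\lambda_1)$ inside both $\{u: x+h_1u\in\mathcal{X}\}$ and the region where $w>0$, and then invoking compactness and continuity. The paper delegates most of these steps to Lemma 11 of \cite{fan2016multivariate} and only sketches the final lower bound, whereas you spell out the density sandwich $f_\ast\Omega \preceq \Xi \preceq f^\ast\Omega$, the nested-ball inequality $|x'-x|\le(1-\lambda_1)h_1$, and the continuity-plus-positivity argument explicitly; this is a useful self-contained expansion of the same proof rather than a different route.
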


\begin{theorem} \label{thm:bias}
	Let Assumptions \ref{ass:X}, \ref{ass:Y}, and \ref{ass:kernel} hold. Then
	\begin{align} \label{eqn:bias-general}
		& H_1 \left( \bar{\bm{\beta}}(y,x) - \bm{\beta}^*(y,x) \right) \nonumber \\
		= & \frac{h_1^2}{2}  \Omega(x,h_1)^{-1} \sum_{\ell,\ell' = 1}^d \frac{\partial^2 }{\partial x_\ell \partial x_{\ell'}} F(y|x) \int \bm{r}(u) u_{\ell} u_{\ell'} w(u) \mathbf{1}\{x + h_1 u \in \mathcal{X}\} du \nonumber \\
		+ & \frac{h_2^2}{2} \Omega(x,h_1)^{-1} \frac{\partial^2 }{\partial y^2} F(y|x) \int \bm{r}(u) w(u) \mathbf{1}\{x + h_1 u \in \mathcal{X}\} du + o(h_1^2 + h_2^2),
	\end{align}
	uniformly over $y \in \mathbb{R}$ and $x \in \mathcal{X}$.
	In particular, we have
	\begin{align} \label{eqn:bias-interior}
		\bar{\bm{\beta}}_0(y,x) - \bm{\beta}_0^*(y,x) = \frac{h_1^2}{2} \sum_{\ell =1}^d \frac{\partial^2}{\partial x_\ell^2} F(y|x) + \frac{h_2^2}{2}  \frac{\partial^2}{\partial y^2} F(y|x) + o(h_1^2 + h_2^2),
	\end{align}
	uniformly over $y \in \mathbb{R}$ and $x \in \mathring{\mathcal{X}}_{h_1}$, where $\mathring{\mathcal{X}}_{h_1} = \{ x \in \mathcal{X} : x \pm h_1 = (x_1 \pm h_1, \cdots, x_d \pm h_1) \in \mathcal{X} \}$
	denotes the set of interior points with respect to the bandwidth $h_1$.

\end{theorem}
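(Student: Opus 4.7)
The plan is to work from the identity
\begin{equation*}
H_1\bigl(\bar{\bm{\beta}}(y,x) - \bm{\beta}^*(y,x)\bigr) = \Xi(x,h_1)^{-1}\bigl[\bm{\upsilon}(y,x,h_1,h_2) - \Xi(x,h_1)\, H_1 \bm{\beta}^*(y,x)\bigr],
\end{equation*}
which follows from the first-order condition of (\ref{eqn:def-pseudo-true-beta}) and the invertibility of $\Xi(x,h_1)$ granted by Lemma~\ref{lm:inverse-matrix-well-defined}. The key algebraic observation is that $\bm{r}(u)^\top H_1\bm{\beta}^*(y,x) = F(y|x) + h_1 u^\top \nabla_x F(y|x)$ is precisely the first-order Taylor polynomial of $F(y|\cdot)$ at $x$, evaluated at $x + h_1 u$. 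After changing variables to $u = (X-x)/h_1$ in the expectations defining $\bm{\upsilon}$ and $\Xi$, the bracketed quantity reduces to an integral whose integrand is the second-order Taylor remainder of $F$ in $x$, together with a smoothing-in-$y$ error coming from using $K$ rather than an indicator.

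First I would isolate the $y$-smoothing error. Since $K$ is the antiderivative of $k$, integration by parts gives $\mathbb{E}[K((y-Y)/h_2)\mid X = x'] = \int k(v)\,F(y-h_2 v \mid x')\,dv$, and Taylor expansion in $y$ combined with the symmetry and unit second moment of $k$ yields $F(y|x') + (h_2^2/2)\,\partial_y^2 F(y|x') + o(h_2^2)$ uniformly in $(y,x')$ by the uniform continuity of $\partial_y^2 F$. Inserting this into the change-of-variable form of $\bm{\upsilon}$, applying the Taylor expansion $F(y|x+h_1 u) - \bm{r}(u)^\top H_1\bm{\beta}^*(y,x) = (h_1^2/2)\sum_{\ell,\ell'} u_\ell u_{\ell'}\,\partial_\ell\partial_{\ell'}F(y|x) + o(h_1^2)$, and replacing the slowly varying factors $f_X(x+h_1 u)$ and $\partial_y^2 F(y|x+h_1 u)$ by $f_X(x)$ and $\partial_y^2 F(y|x)$ at $o(1)$ cost, then expresses $\bm{\upsilon} - \Xi H_1\bm{\beta}^*$ as $f_X(x)$ times the two bracketed terms of (\ref{eqn:bias-general}) (but with the $\Omega^{-1}$ factor removed), up to $o(h_1^2 + h_2^2)$. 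Since $\Xi(x,h_1) = f_X(x)\,\Omega(x,h_1) + o(1)$ uniformly, Lemma~\ref{lm:inverse-matrix-well-defined} also gives $\Xi(x,h_1)^{-1} = f_X(x)^{-1}\,\Omega(x,h_1)^{-1} + o(1)$, and left-multiplying by $\Xi(x,h_1)^{-1}$ produces (\ref{eqn:bias-general}).

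For the interior statement (\ref{eqn:bias-interior}), the compact support of $w$ forces $\mathbf{1}\{x + h_1 u \in \mathcal{X}\} \equiv 1$ on the domain of integration whenever $x \in \mathring{\mathcal{X}}_{h_1}$. The product structure and symmetry of $w$, together with $\int u_\ell^2 w_\ell(u_\ell)\,du_\ell = 1$, then reduce $\Omega(x,h_1)$ to $\operatorname{diag}(1, I_d)$ and give $\int \bm{r}(u) u_\ell u_{\ell'} w(u)\,du = \delta_{\ell\ell'}\,\bm{e}_0$ and $\int \bm{r}(u) w(u)\,du = \bm{e}_0$; extracting the first coordinate of (\ref{eqn:bias-general}) then yields the simplified formula.

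The hard part will be verifying that every $o(\cdot)$ in the plan is in fact uniform over $y \in \mathbb{R}$ and $x \in \mathcal{X}$. Uniformity in $y$ across the whole real line (including where $F$ is near $0$ or $1$) relies on the \emph{uniform} continuity of the second derivatives of $F$ on $\mathbb{R}\times\mathcal{X}$ built into Assumption~\ref{ass:Y}, since mere continuity would allow the Taylor remainders to deteriorate as $|y| \to \infty$. Uniformity in $x$ up to the boundary of $\mathcal{X}$ is delicate because $\mathbf{1}\{x+h_1 u \in \mathcal{X}\}$ depends nontrivially on $(x,h_1)$, and is handled through Lemma~\ref{lm:inverse-matrix-well-defined}, whose uniform eigenvalue lower bound on $\Omega(x,h_1)$ ultimately rests on the Fan--Guerre-type regularity of $\mathcal{X}$ from Assumption~\ref{ass:X}.
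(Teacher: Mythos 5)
Your proposal is correct and follows essentially the same route as the paper's proof: the identity $H_1(\bar{\bm{\beta}}-\bm{\beta}^*) = \Xi^{-1}(\bm{\upsilon}-\Xi H_1\bm{\beta}^*)$, change of variables and integration by parts to express $\tilde F$ via $\int k(v)F(y-h_2 v|\cdot)\,dv$, separate Taylor expansions in $y$ and $x$ with uniformity coming from the uniform continuity of the second derivatives in Assumption \ref{ass:Y}, the replacement $\Xi = f_X\Omega + o(1)$, and the simplification of $\Omega$ and the kernel moments on $\mathring{\mathcal{X}}_{h_1}$. The one slight presentational difference is that you pass explicitly through $\Xi^{-1}=f_X^{-1}\Omega^{-1}+o(1)$ before left-multiplying, while the paper combines these steps in a single ``Therefore,'' but the content is identical.
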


The novelty of Theorem \ref{thm:bias} is that it provides a uniform bias expansion for the LLR estimator over the entire region $(y,x) \in \mathbb{R} \times \mathcal{X}$. For the boundary points $x \notin \mathring{\mathcal{X}}_{h_1}$, the bias is $O(h_1^2 + h_2^2)$.
For the interior points $x \in \mathring{\mathcal{X}}_{h_1}$, the bias expression (\ref{eqn:bias-interior}) is the same as in \cite{hansen2004nonparametric} and Chapter 6 of \cite{li2007nonparametric}, which contains the curvature of $F(y|x)$. 

\iffalse
The above results can be used to calculate, for example, the integrated bias:
\begin{align*}
	& \int \left( \bar{\bm{\beta}}(y,x) - \bm{\beta}^*(y,x) \right) \pi(y,x) dydx \\
	\sim & \frac{h_1^2}{2} \sum_{\ell =1}^d \int \frac{\partial^2}{\partial x_\ell^2} F(y|x) \pi(y,x) dydx + \frac{h_2^2}{2} \int \frac{\partial^2}{\partial y^2} F(y|x) \pi(y,x) dydx
\end{align*}
where $\pi$ is an integrable weighting function.
\fi

\section{Uniform Convergence Rate} \label{sec:stochastic}

In this section, we derive the uniform convergence rate of the stochastic term $\hat{\bm{\beta}}(y,x,h_1,h_2) - \bar{\bm{\beta}}(y,x,h_1,h_2)$. We make use of empirical process theory, which is a powerful tool for studying the uniform convergence of random sequences. Some auxiliary concepts and results are introduced below.

Let $\mathcal{G}$ be a class of uniformly bounded measurable functions defined on some subset of $\mathbb{R}^d$, that is, there exists $M>0$ such that $|g| \leq M$ for all $g \in \mathcal{G}$. We say $\mathcal{G}$ is \textit{Euclidean} with coefficients $(A,v)$, where $A,v>0$, if for every probability measure $P$ and every $\epsilon \in (0,1]$, $N(\mathcal{G},P,\epsilon) \leq A / \epsilon^{v}$,
where $N(\mathcal{G},P,\epsilon)$ is the $\epsilon$-covering of the metric space $(\mathcal{G},L_2(P))$, that is, $N(\mathcal{G},P,\epsilon)$ is defined as the minimal number of open $\norm{\cdot}_{L_2(P)}$-balls of radius $\epsilon$ and centers in $\mathcal{G}$ required to cover $\mathcal{G}$. By definition, if $\mathcal{G}$ is Euclidean with coefficients $(A,v)$, then any subset of $\mathcal{G}$ is also Euclidean with coefficients $(A,v)$.

The above definition of Euclidean classes is introduced by \cite{nolan1987uprocess}. The same concept is also studied in \cite{gine1999laws}, but they refer to what we call ``Euclidean'' as ``VC.'' There is a slight difference that \cite{nolan1987uprocess} use the $L_1$-norm, while \cite{gine1999laws} use the $L_2$-norm. We ignored the envelope in their definition because we only work with uniformly bounded $\mathcal{G}$. The following lemma is useful for deriving the uniform convergence results.

\begin{lemma} \label{lm:gine}

Let $\xi_1,\cdots,\xi_n$ be an iid sample of a random vector $\xi$ in $\mathbb{R}^d$. Let $\mathcal{G}_n$ be a sequence of classes of measurable real-valued functions defined on $\mathbb{R}^d$. Assume that there is a uniformly bounded Euclidean class $\mathcal{G}$ with coefficients $A$ and $v$ such that $\mathcal{G}_n \subset \mathcal{G} $ for all $n$. Let $\sigma^2_{n}$ be a positive sequence such that $\sigma^2_{n} \geq \sup_{g \in \mathcal{G}_n} \mathbb{E}[g(\xi)^2]$. Then
    \begin{align*}
        \Delta_n = \sup_{g \in \mathcal{G}_n}  \left| \sum_{i=1}^n (g(\xi_i) - \mathbb{E}g(\xi_i)) \right| = O_p \left( \sqrt{n \sigma_n^2 |\log \sigma_n|} + |\log \sigma_n| \right) .
    \end{align*}
    In particular, if $n \sigma_n^2 / |\log \sigma_n| \rightarrow \infty$, then $\Delta_n = O_p \left( \sqrt{n \sigma_n^2 |\log \sigma_n|} \right) .$

\end{lemma}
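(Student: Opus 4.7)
My plan is to combine two standard ingredients from empirical process theory: a Dudley-type chaining bound for $\mathbb{E}\Delta_n$ that exploits the polynomial covering numbers of a Euclidean class, and a Talagrand-type concentration inequality for $\Delta_n$ around its mean. The key consequence of the Euclidean assumption is that $\log N(\mathcal{G}_n, L_2(Q), \epsilon) \leq \log A + v \log(1/\epsilon)$ uniformly over all probability measures $Q$, so any entropy integral only picks up logarithmic penalties.

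First I would symmetrize to replace the centered process by the Rademacher process $\sup_g \bigl|\sum_i \varepsilon_i g(\xi_i)\bigr|$ and, conditional on the data, invoke Dudley's entropy integral along the $L_2(P_n)$ metric. With $\hat\sigma_n^2 = \sup_{g \in \mathcal{G}_n} n^{-1}\sum_i g(\xi_i)^2$, this gives, for a universal constant $C$,
\begin{align*}
\mathbb{E}_\varepsilon \sup_{g \in \mathcal{G}_n} \Bigl| \sum_{i=1}^n \varepsilon_i g(\xi_i) \Bigr| \leq C \sqrt{n} \int_0^{2\hat\sigma_n} \sqrt{\log A + v \log(1/u)}\, du \leq C' \sqrt{n \hat\sigma_n^2 |\log \hat\sigma_n|}.
\end{align*}
Taking expectations in $\xi$, the Hoffmann--J\o rgensen inequality relates $\mathbb{E}\hat\sigma_n^2$ to $\sigma_n^2 + M n^{-1}\mathbb{E}\Delta_n$, and a self-bounding argument then yields
\begin{align*}
\mathbb{E}\Delta_n \leq C\bigl( \sqrt{n \sigma_n^2 |\log \sigma_n|} + |\log \sigma_n| \bigr),
\end{align*}
where the linear term emerges from the regime in which $n\sigma_n^2$ is of order $|\log \sigma_n|$ or smaller.

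Next I would apply Talagrand's inequality in the Bousquet form to upgrade this expectation bound to an $O_p$ statement. For uniformly bounded classes it gives
\begin{align*}
\Pr\bigl( \Delta_n \geq \mathbb{E}\Delta_n + t \bigr) \leq \exp\left( - \frac{c\, t^2}{n \sigma_n^2 + M\, \mathbb{E}\Delta_n + M t} \right),
\end{align*}
so choosing $t$ of the same order as the expectation bound drives the right-hand side to zero, and therefore $\Delta_n = O_p(\sqrt{n\sigma_n^2 |\log \sigma_n|} + |\log \sigma_n|)$. The ``in particular'' clause is then immediate, because $n \sigma_n^2 / |\log \sigma_n| \to \infty$ forces the square-root term to dominate the linear one.

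The main obstacle is the careful accounting of logarithmic factors: obtaining $|\log \sigma_n|$ rather than $\log n$ requires invoking Dudley's integral with the data-dependent radius $\hat\sigma_n$ and then controlling its deviation from the population variance bound $\sigma_n^2$. This is precisely the technical ingredient developed in \cite{GINE2001consistency} and \cite{GINE2002rates}, and tracking it through the symmetrization-plus-concentration pipeline is where the calculation becomes delicate.
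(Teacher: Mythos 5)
Your plan is correct in its essential mathematical content, and it follows the same high-level route as the paper: symmetrize, bound the Rademacher process using the polynomial entropy of a Euclidean class, and convert the expectation bound into an $O_p$ statement. The differences are in how much machinery each step invokes. The paper does not re-derive the Rademacher expectation bound; it cites Proposition~2.1 of \cite{GINE2001consistency}, which is precisely the packaged form of the Dudley-chaining-with-data-dependent-radius plus Hoffmann--J\o rgensen/self-bounding argument you sketch. Your re-derivation is in the right spirit (and the $|\log\sigma_n|$ linear term does indeed fall out of the regime where $n\sigma_n^2 \lesssim |\log\sigma_n|$), but it is substantially more work than the citation, and your entropy-integral step implicitly requires $\hat\sigma_n$ bounded away from $1$ to equate the integral with $\hat\sigma_n\sqrt{|\log\hat\sigma_n|}$---a detail that Gin\'e--Guillou handle carefully and you gloss over. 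The second difference is more consequential for economy: to pass from $\mathbb{E}\Delta_n = O(a_n)$ to $\Delta_n = O_p(a_n)$ you invoke Talagrand's (Bousquet) concentration inequality, but since $\Delta_n \geq 0$, Markov's inequality alone gives $\Pr(\Delta_n > M a_n) \leq \mathbb{E}\Delta_n/(M a_n) \leq C/M$, which is exactly what the paper does (calling it Chebyshev). Talagrand is overkill here; it would only be needed for almost-sure or tail results, which this lemma does not claim. So your proof is correct but heavier than necessary on both fronts.
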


The above lemma is based on the results developed by \cite{GINE2001consistency} and \cite{GINE2002rates}. These two papers focus on proving the almost sure convergence of kernel density estimators based on empirical process theory. We simplify their method and make it available to users who are only interested in convergence in probability. Based on Lemma \ref{lm:gine}, deriving the uniform convergence rate of kernel-based nonparametric estimators boils down to two parts: proving the relevant function classes are Euclidean and computing a uniform bound for the variance.

Controlling the stochastic term does not require the smoothness of the conditional distribution function $F$. We only need the assumptions regarding the support $\mathcal{X}$ and kernel functions. The following theorem establishes the uniform convergence rate for the stochastic term of the LLR estimator. Then, combining this result with Theorem \ref{thm:bias}, we obtain the uniform convergence rate of the LLR estimator as a corollary.

\begin{theorem} \label{thm:stochastic}
	Let Assumptions \ref{ass:X} and \ref{ass:kernel} hold. If the bandwidth satisfies $n h_1^d / |\log h_1| \rightarrow \infty$, then
	\begin{align} \label{eqn:uniform-rate-stochastic}
		\sup_{y \in \mathbb{R}, x \in \mathcal{X}} \left| H_1 \left(\hat{\bm{\beta}}(y,x,h_1,h_2) - \bar{\bm{\beta}}(y,x,h_1,h_2) \right) \right| = O_p \left( \sqrt{\frac{|\log h_1|}{nh_1^d}} \right).
	\end{align}
\end{theorem}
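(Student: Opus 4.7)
The plan is to decompose the stochastic error in the usual way via the first-order condition and then apply Lemma \ref{lm:gine} to the two resulting empirical-process pieces. Writing $\hat{\Xi}=\hat{\Xi}(x,h_1)$, $\Xi=\Xi(x,h_1)$, $\hat{\bm{\upsilon}}=\hat{\bm{\upsilon}}(y,x,h_1,h_2)$, and $\bm{\upsilon}=\bm{\upsilon}(y,x,h_1,h_2)$, equation (\ref{eqn:Xihat-upsilonhat}) together with its population analogue gives
\begin{align*}
H_1\bigl(\hat{\bm{\beta}}-\bar{\bm{\beta}}\bigr)=\hat{\Xi}^{-1}\bigl(\hat{\bm{\upsilon}}-\bm{\upsilon}\bigr)+\hat{\Xi}^{-1}\bigl(\Xi-\hat{\Xi}\bigr)\Xi^{-1}\bm{\upsilon}.
\end{align*}
By Lemma \ref{lm:inverse-matrix-well-defined}, $\Xi^{-1}$ is uniformly bounded in operator norm; the compact support of $w$ and $Y$'s distribution yield $\sup_{y,x}\|\bm{\upsilon}\|\le C$. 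Thus the whole problem reduces to establishing uniform-in-$(y,x)$ rates for $\hat{\Xi}-\Xi$ and $\hat{\bm{\upsilon}}-\bm{\upsilon}$, after which a standard argument shows $\hat{\Xi}^{-1}$ is $O_p(1)$ uniformly in $x$.

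To control each of these two terms I would apply Lemma \ref{lm:gine} to the function classes
\begin{align*}
\mathcal{G}_{1,n}&=\Bigl\{\,X\mapsto r_j\bigl(\tfrac{X-x}{h_1}\bigr)r_{j'}\bigl(\tfrac{X-x}{h_1}\bigr)w\bigl(\tfrac{X-x}{h_1}\bigr):x\in\mathcal{X},\ j,j'\in\{0,\dots,d\}\Bigr\},\\
\mathcal{G}_{2,n}&=\Bigl\{\,(Y,X)\mapsto r_j\bigl(\tfrac{X-x}{h_1}\bigr)K\bigl(\tfrac{y-Y}{h_2}\bigr)w\bigl(\tfrac{X-x}{h_1}\bigr):x\in\mathcal{X},\ y\in\mathbb{R},\ j\in\{0,\dots,d\}\Bigr\}.
\end{align*}
Both classes are uniformly bounded because $w$ has compact support $[-1,1]^d$, $\bm{r}$ is bounded on that support, and $0\le K\le 1$. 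For the variance bound, a change of variables $u=(X-x)/h_1$ combined with $f_X$ being bounded gives $\mathbb{E}[g^2]=O(h_1^d)$ for every $g\in\mathcal{G}_{1,n}\cup\mathcal{G}_{2,n}$, so I may take $\sigma_n^2\asymp h_1^d$. Provided both classes are Euclidean with coefficients not depending on $n$, Lemma \ref{lm:gine} then yields
\begin{align*}
\sup_{x\in\mathcal{X}}\bigl\|\hat{\Xi}(x,h_1)-\Xi(x,h_1)\bigr\|+\sup_{y,x}\bigl\|\hat{\bm{\upsilon}}-\bm{\upsilon}\bigr\|=O_p\Bigl(\sqrt{|\log h_1|/(nh_1^d)}\Bigr),
\end{align*}
since dividing $\sqrt{nh_1^d|\log h_1|}$ by $nh_1^d$ gives the advertised rate, and under $nh_1^d/|\log h_1|\to\infty$ the second term $|\log h_1|$ in the bound of Lemma \ref{lm:gine} is negligible.

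The main obstacle is therefore verifying that $\mathcal{G}_{1,n}$ and $\mathcal{G}_{2,n}$ are contained in fixed Euclidean classes. For $\mathcal{G}_{1,n}$ I would argue that coordinate shifts of a function of bounded variation form a Euclidean class (a standard fact, since bounded-variation functions are differences of two bounded monotone functions, whose subgraph families are VC), and then use that the pointwise product of a uniformly bounded Euclidean class with a fixed polynomial (the components of $\bm{r}$, restricted to the support of $w$) remains Euclidean, possibly with enlarged coefficients. The class $\mathcal{G}_{2,n}$ is the pointwise product of $\mathcal{G}_{1,n}$ with $\mathcal{K}=\{(y,Y)\mapsto K((y-Y)/h_2):y\in\mathbb{R}\}$; because $K$ is monotone, $\mathcal{K}$ is a family of location shifts of a monotone bounded function and is itself Euclidean. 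A key subtlety is that $y$ ranges over all of $\mathbb{R}$ rather than a compact set, but this is absorbed because $K$ is constant outside the effective window, so the covering number of $\mathcal{K}$ in any $L_2(P)$ metric is still polynomial in $1/\epsilon$. Assembling these Euclidean-property lemmas (which I would collect in the appendix on empirical process theory) is the part that requires care; once that is in place, Steps 1--3 above deliver the theorem.
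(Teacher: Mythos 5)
Your proposal follows the paper's proof essentially verbatim: both decompose $H_1(\hat{\bm{\beta}}-\bar{\bm{\beta}})$ into contributions from $\hat{\bm{\upsilon}}-\bm{\upsilon}$ and $\hat{\Xi}-\Xi$, bound each uniformly by applying Lemma \ref{lm:gine} to product function classes built from shifted and rescaled bounded-variation kernels with $\sigma_n^2 \asymp h_1^d$, and then invoke Lemma \ref{lm:inverse-matrix-well-defined} to control $\Xi^{-1}$ and $\hat{\Xi}^{-1}$. The one place your sketch needs tightening is the Euclidean verification: the factor $r_j((X-x)/h_1)$ is not a fixed polynomial since it varies with the class index $(x,h_1)$, so rather than multiplying a Euclidean class by a fixed function you should absorb the monomial into the kernel and apply Lemma \ref{lm:BV-VC} to the bounded-variation function $u\mapsto u^{\pi_\ell}w_\ell(u)$ directly, which is exactly what the paper does through its classes $\Psi_{X_\ell}$ (and which likewise handles the $n$-dependence of $h_1,h_2$ by letting the scale parameter $h>0$ range freely in the enveloping class).
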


\begin{corollary} \label{cor:uniform-convergence}
	Let Assumptions \ref{ass:X}, \ref{ass:Y}, and \ref{ass:kernel} hold. Then
	\begin{align*}
		\sup_{y \in \mathbb{R}, x \in \mathcal{X}} \left| \hat{F}(y|x) - F(y|x) \right| = O_p \left( h_1^2 + h_2^2 + \sqrt{\frac{|\log h_1|}{nh_1^d}} \right).
	\end{align*}
\end{corollary}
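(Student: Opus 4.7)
The proof of Corollary \ref{cor:uniform-convergence} is essentially a bookkeeping combination of Theorems \ref{thm:bias} and \ref{thm:stochastic}. The plan is to start from the decomposition
\[
\hat F(y|x) - F(y|x) = \bm{e}_0^\top\bigl(\hat{\bm{\beta}}(y,x,h_1,h_2) - \bar{\bm{\beta}}(y,x,h_1,h_2)\bigr) + \bm{e}_0^\top\bigl(\bar{\bm{\beta}}(y,x,h_1,h_2) - \bm{\beta}^*(y,x)\bigr),
\]
apply the two theorems to the stochastic and bias parts separately, and conclude by the triangle inequality.

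The key algebraic observation is that the $(1,1)$-entry of $H_1$ equals $1$, so $\bm{e}_0^\top H_1 = \bm{e}_0^\top$. Consequently, for any $v \in \mathbb{R}^{d+1}$ the first coordinate of $v$ coincides with the first coordinate of $H_1 v$, and is therefore bounded in absolute value by $|H_1 v|$. Applied to the stochastic part, this yields
\[
\sup_{y\in\mathbb{R},\, x\in\mathcal{X}} \bigl|\bm{e}_0^\top(\hat{\bm{\beta}} - \bar{\bm{\beta}})\bigr| \;\leq\; \sup_{y\in\mathbb{R},\, x\in\mathcal{X}} \bigl|H_1(\hat{\bm{\beta}} - \bar{\bm{\beta}})\bigr| \;=\; O_p\Bigl(\sqrt{|\log h_1|/(nh_1^d)}\Bigr)
\]
by Theorem \ref{thm:stochastic}, provided the bandwidth condition $nh_1^d/|\log h_1| \to \infty$ holds; otherwise the rate stated in the corollary is trivially satisfied because $\hat F$ and $F$ both take values in $[0,1]$. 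For the bias part, applying $\bm{e}_0^\top$ to equation (\ref{eqn:bias-general}) expresses $\bar{\bm{\beta}}_0 - \bm{\beta}_0^*$ as a sum of two terms of respective orders $h_1^2$ and $h_2^2$, each of the form $\bm{e}_0^\top \Omega(x,h_1)^{-1}$ multiplied by an integral of $\bm{r}(u)$ against $w(u)$ (weighted either by $u_\ell u_{\ell'}$ or not) times a second-order derivative of $F$. Lemma \ref{lm:inverse-matrix-well-defined} bounds the operator norm of $\Omega(x,h_1)^{-1}$ uniformly in $(x,h_1)$; the integrals are uniformly bounded because $w$ is a compactly supported bounded density; and Assumption \ref{ass:Y} supplies uniformly bounded second-order partial derivatives of $F$ on $\mathbb{R}\times\mathcal{X}$. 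Together with the $o(h_1^2 + h_2^2)$ remainder already contained in Theorem \ref{thm:bias}, this delivers $\sup_{y,x}|\bar{\bm{\beta}}_0 - \bm{\beta}_0^*| = O(h_1^2 + h_2^2)$, and adding the two rates completes the proof.

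There is no real obstacle: the corollary is a direct consequence of the two preceding theorems once the identity $\bm{e}_0^\top H_1 = \bm{e}_0^\top$ is used to convert $H_1$-scaled vector bounds into bounds on the scalar first coordinate. The only point worth double-checking is that the uniformity in $y \in \mathbb{R}$ asserted by Theorem \ref{thm:bias} is genuine for our purposes, which requires $\partial_y^2 F$ to be bounded on $\mathbb{R}\times\mathcal{X}$; this is supplied by Assumption \ref{ass:Y} in conjunction with $F(\cdot|x) \in [0,1]$.
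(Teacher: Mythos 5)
Your main argument is correct and is exactly the route the paper takes: the paper gives no separate proof of this corollary, stating only that it follows by combining Theorems~\ref{thm:bias} and~\ref{thm:stochastic}, and your decomposition into bias and stochastic parts, together with the observation that $\bm{e}_0^\top H_1 = \bm{e}_0^\top$, is the right bookkeeping. Two parenthetical remarks you add are not correct, however. First, you claim that if $nh_1^d/|\log h_1| \not\to\infty$ the corollary is ``trivially satisfied because $\hat F$ and $F$ both take values in $[0,1]$''; but $\hat F$ is the local linear estimator, and the paper emphasizes in the introduction that the LLR estimator is \emph{not} a proper distribution function, so $\hat F(y|x)$ can fall outside $[0,1]$ (and indeed, without the bandwidth condition $\hat\Xi$ need not be invertible, so $\hat F$ need not even be $O_p(1)$). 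The bandwidth condition of Theorem~\ref{thm:stochastic} should simply be understood as implicit in the corollary. Second, the assertion that boundedness of $\partial_y^2 F$ on $\mathbb{R}\times\mathcal{X}$ follows from Assumption~\ref{ass:Y} ``in conjunction with $F(\cdot|x)\in[0,1]$'' is a non sequitur --- uniform continuity of $\partial_y^2 F$ plus boundedness of $F$ does not bound $\partial_y^2 F$. In the paper's proof of Theorem~\ref{thm:bias} this boundedness is obtained from continuity on the compact set $\mathrm{supp}(Y,X)$; at the level of the corollary you should just cite the conclusion of Theorem~\ref{thm:bias} directly rather than re-deriving it.
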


We want to compare the above uniform convergence result with the one in \cite{masry1996multivariate}. In \cite{masry1996multivariate}, the covariates $X$ are supported on the entire space $\mathbb{R}^d$ while the convergence result is only uniform for $x$ in a compact subset of $\mathbb{R}^d$. In our case, the support $\mathcal{X}$ is compact, and the convergence result is uniform over the entire support $(y,x) \in \mathbb{R} \times \mathcal{X}$.

Corollary \ref{cor:uniform-convergence} shows that the uniformity over $y \in \mathbb{R}$ does not have an impact on the convergence rate. This is similar to the fact that we can uniformly estimate the unconditional distribution function under the $n^{-1/2}$-rate. The conditional distribution estimation is a nonparametric problem concerning only the covariates.

So far we have been studying the smoothed estimator. It is also of interest to study the unsmoothed version defined as $\check{F} (y|x) = \bm{e}_0^\top \bm{\check{\beta}}(y,x,h_1)$, where
\begin{align*} 
	\bm{\check{\beta}}(y,x,h_1) & = \argmin_{\bm{\beta} \in \mathbb{R}^{d+1}} \sum_{i=1}^n \left( \mathbf{1}\{Y_i \leq y\} - \bm{r}(X_i - x)^\top \bm{\beta}  \right)^2 w \left( \frac{X_i - x}{h_1} \right).
\end{align*}
The above minimization problem is constructed by replacing the term $K((y-Y_i)/h_2)$ by the term $\mathbf{1}\{Y_i \leq y\}$. For this estimator, we only need to chose one bandwidth $h_1$. Since $\mathbb{E}[\mathbf{1}\{Y_i \leq y\}|X=x]=F(y|x)$, the bias term of this estimator is only $O(h_1^2)$. The bias associated with the smoothing in $y$ no longer exists. The uniform convergence of $\check{F}$ can be established without the differentiability of $F$ with respect to $y$. The trade-off is that the estimator $\check{F} (y|x)$ itself is not smooth in $y$ even if $F$ is. The stochastic term can be analyzed as before. Then we obtain the following uniform convergence rate for the unsmoothed estimator. Notice that we replace Assumption \ref{ass:Y} by a weaker condition which only requires the smoothness of $F$ with respect to $x$.
\begin{theorem} \label{thm:unsmoothed}
	Let Assumptions \ref{ass:X} and \ref{ass:kernel} hold. Assume that $F(y|x)$ restricted to $\mathbb{R} \times \mathcal{X}$ is twice continuously differentiable in $x$, and the second-order derivative $\nabla_x^\top\nabla_x F$ is uniformly continuous on $\mathbb{R} \times \mathcal{X}$. If the bandwidth satisfies $n h_1^d / |\log h_1| \rightarrow \infty$, then
	\begin{align*}
		\sup_{y \in \mathbb{R}, x \in \mathcal{X}} \left| \check{F}(y|x) - F(y|x) \right| = O_p \left( h_1^2 + \sqrt{\frac{|\log h_1|}{nh_1^d}} \right).
	\end{align*}
\end{theorem}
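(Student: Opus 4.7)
The plan is to mirror the proofs of Theorems \ref{thm:bias} and \ref{thm:stochastic}, substituting the indicator $\mathbf{1}\{Y_i \leq y\}$ for the smoothed regressand $K((y-Y_i)/h_2)$ throughout, and exploiting the exact identity $\mathbb{E}[\mathbf{1}\{Y \leq y\} \mid X=x] = F(y|x)$ to eliminate the $h_2$-branch of the bias. Introduce the analogous pseudo-true value
\begin{align*}
	\bar{\bm{\beta}}^{\mathrm{c}}(y,x,h_1) = \argmin_{\bm{\beta} \in \mathbb{R}^{d+1}} \mathbb{E}\left[\left(\mathbf{1}\{Y \leq y\} - \bm{r}(X-x)^\top \bm{\beta}\right)^2 w\left((X-x)/h_1\right)\right],
\end{align*}
and decompose $\check{\bm{\beta}} - \bm{\beta}^* = (\check{\bm{\beta}} - \bar{\bm{\beta}}^{\mathrm{c}}) + (\bar{\bm{\beta}}^{\mathrm{c}} - \bm{\beta}^*)$, treating the two pieces separately.

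For the bias $\bar{\bm{\beta}}^{\mathrm{c}} - \bm{\beta}^*$, the first-order condition yields $H_1 \bar{\bm{\beta}}^{\mathrm{c}} = \Xi(x,h_1)^{-1} \bar{\bm{\upsilon}}^{\mathrm{c}}$, where $\bar{\bm{\upsilon}}^{\mathrm{c}}$ is $\bm{\upsilon}$ with $\mathbf{1}\{Y \leq y\}$ in place of $K((y-Y)/h_2)$. Conditioning on $X$ replaces this indicator by $F(y|X)$, after which a second-order Taylor expansion of $F(y|\cdot)$ around $x$ in the $x$-argument—justified by the stated twice continuous differentiability in $x$ and uniform continuity of $\nabla_x \nabla_x^\top F$—reproduces the $h_1^2$-branch of the calculation in the proof of Theorem \ref{thm:bias} verbatim, while the $h_2^2$-branch simply does not appear. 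This gives $H_1(\bar{\bm{\beta}}^{\mathrm{c}} - \bm{\beta}^*) = O(h_1^2)$ uniformly over $(y,x) \in \mathbb{R} \times \mathcal{X}$, and in particular the intercept contributes $O(h_1^2)$ to $\check{F} - F$.

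For the stochastic part, write
\begin{align*}
	H_1(\check{\bm{\beta}} - \bar{\bm{\beta}}^{\mathrm{c}}) = \hat{\Xi}(x,h_1)^{-1}\bigl(\check{\bm{\upsilon}} - \bar{\bm{\upsilon}}^{\mathrm{c}}\bigr) + \bigl(\hat{\Xi}(x,h_1)^{-1} - \Xi(x,h_1)^{-1}\bigr)\bar{\bm{\upsilon}}^{\mathrm{c}}.
\end{align*}
The control on $\hat{\Xi} - \Xi$ from the proof of Theorem \ref{thm:stochastic} does not involve $y$ and transfers without change, as does the uniform invertibility from Lemma \ref{lm:inverse-matrix-well-defined}; since $|\bar{\bm{\upsilon}}^{\mathrm{c}}|$ is uniformly bounded (because $F \leq 1$ and $w$ has compact support), both pieces reduce to bounding $\sup_{y,x} |\check{\bm{\upsilon}} - \bar{\bm{\upsilon}}^{\mathrm{c}}|$ at rate $\sqrt{|\log h_1|/(n h_1^d)}$. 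For this I apply Lemma \ref{lm:gine} to the coordinate-wise classes
\begin{align*}
	\mathcal{G}_n^{(j)} = \Bigl\{(Y,X) \mapsto \mathbf{1}\{Y \leq y\}\, \bm{r}_j\bigl((X-x)/h_1\bigr)\, w\bigl((X-x)/h_1\bigr) : y \in \mathbb{R},\; x \in \mathcal{X}\Bigr\}.
\end{align*}
The variance bound $\sup_{g \in \mathcal{G}_n^{(j)}}\mathbb{E}[g^2] = O(h_1^d)$ is immediate, since $|\mathbf{1}\{Y\leq y\}| \leq 1$ and $w((X-x)/h_1)$ confines $X$ to a ball of radius $h_1$; Lemma \ref{lm:gine} then gives $O_p(\sqrt{n h_1^d |\log h_1|})$ for the unnormalized sup, and dividing by $n h_1^d$ yields the desired rate.

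The main obstacle is showing that each $\mathcal{G}_n^{(j)}$ sits inside a Euclidean class whose coefficients do not depend on $n$. This is assembled from three standard facts: (i) half-line indicators $\{\mathbf{1}\{\,\cdot\, \leq y\} : y \in \mathbb{R}\}$ form a VC class and are therefore Euclidean; (ii) $\{(X,x) \mapsto \bm{r}_j((X-x)/h_1)\, w((X-x)/h_1) : x \in \mathcal{X},\, h_1 > 0\}$ is Euclidean by bounded variation of $w$ and the polynomial form of $\bm{r}$, as already exploited in the proof of Theorem \ref{thm:stochastic}; and (iii) products of uniformly bounded Euclidean classes are Euclidean. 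Combining the $O(h_1^2)$ bias with the $O_p(\sqrt{|\log h_1|/(n h_1^d)})$ stochastic term finishes the proof.
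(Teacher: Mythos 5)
Your proposal is correct and follows essentially the same route as the paper: the paper's own proof is a single paragraph that says, in effect, ``replace $K((y-Y_i)/h_2)$ by $\mathbf{1}\{Y_i \le y\}$, rerun the bias expansion of Theorem~\ref{thm:bias} dropping the $h_2$-branch (so only equations (\ref{eqn:bias-F-x}) and (\ref{eqn:bias-f-x}) are needed), and rerun Theorem~\ref{thm:stochastic} with $\Psi_Y$ replaced by the class of half-line indicators.'' You spell all of this out correctly: the pseudo-true value, the exact conditional-mean identity that kills the $h_2$ bias, the Taylor expansion in $x$ under the stated (weaker) smoothness assumption, and the empirical-process bound for $\check{\bm{\upsilon}} - \bar{\bm{\upsilon}}^{\mathrm{c}}$. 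Two cosmetic differences: you justify that the indicator class is Euclidean via the VC-class property of half-lines, whereas the paper's machinery would more naturally obtain it from Lemma~\ref{lm:BV-VC} by noting $\mathbf{1}\{\cdot \le y\}$ is of bounded variation; and you add-and-subtract through $\hat{\Xi}^{-1}\bar{\bm{\upsilon}}^{\mathrm{c}}$ rather than through $\Xi^{-1}\hat{\bm{\upsilon}}$ as the paper does in Theorem~\ref{thm:stochastic} --- both are fine. One phrase is slightly misleading: you say ``both pieces reduce to bounding $\sup_{y,x}|\check{\bm{\upsilon}} - \bar{\bm{\upsilon}}^{\mathrm{c}}|$,'' but the second piece is in fact handled by the rate of $\hat{\Xi}^{-1} - \Xi^{-1}$ (which you do invoke correctly just before), not by that sup; this is a wording slip, not a substantive error.
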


\section{Uniform Asymptotic Linear Representation} \label{sec:asymptotic-linear}

This section derives the uniform asymptotic linear representation of the smoothed LLR estimator. These results are particularly useful in deriving the asymptotic distribution for complicated estimators.

\begin{theorem} \label{thm:asymptotic-linear}
	Let Assumptions \ref{ass:X}, \ref{ass:Y}, and \ref{ass:kernel} hold. If the bandwidth satisfies that $n h_1^d / |\log h_1| \rightarrow \infty$, $nh_1^{d+4}/|\log h_1|$ bounded, and $h_2 = O(h_1)$, then 
	\begin{align*}
		H_1 \left(\hat{\bm{\beta}}(y,x,h_1,h_2) - \bar{\bm{\beta}}(y,x,h_1,h_2) \right) = \Xi(x,h_1)^{-1} \frac{1}{nh_1^d} \sum_{i=1}^n \bm{s}(Y_i,X_i;y,x,h_1,h_2) + O_p \left( \frac{|\log h_1|}{nh_1^d} \right),
	\end{align*}
	uniformly over $y \in \mathbb{R}$ and $x \in \mathcal{X}$, where
	\begin{align*}
		\bm{s}(Y_i,X_i;y,x,h_1,h_2) & = \bm{r}\left( \frac{X_i - x}{h_1} \right) \left( K\left( \frac{y - Y_i}{h_2}  \right) - \tilde{F}(y|X_i) \right) w \left( \frac{X_i - x}{h_1} \right), \\
		\tilde{F}(y|x ) & = \mathbb{E}[ K( (y - Y)/h_2 ) \mid X=x ]. 
	\end{align*}
\end{theorem}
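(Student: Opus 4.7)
The plan is to reduce $H_1(\hat{\bm{\beta}} - \bar{\bm{\beta}}) = \hat{\Xi}^{-1}\hat{\bm{\upsilon}} - \Xi^{-1}\bm{\upsilon}$ via an algebraic expansion, then identify the linear representation by exploiting the first-order condition that defines the pseudo-true value $\bar{\bm{\beta}}$. Using the matrix identity $\hat{\Xi}^{-1} - \Xi^{-1} = -\hat{\Xi}^{-1}(\hat{\Xi}-\Xi)\Xi^{-1}$ together with the population FOC $\bm{\upsilon} = \Xi H_1 \bar{\bm{\beta}}$ yields
\begin{align*}
H_1(\hat{\bm{\beta}} - \bar{\bm{\beta}}) = \Xi^{-1}(\hat{\bm{\upsilon}} - \bm{\upsilon}) - \hat{\Xi}^{-1}(\hat{\Xi} - \Xi) H_1 \bar{\bm{\beta}} - \hat{\Xi}^{-1}(\hat{\Xi} - \Xi)\Xi^{-1}(\hat{\bm{\upsilon}} - \bm{\upsilon}).
\end{align*}
The last (purely quadratic) term is immediately $O_p(|\log h_1|/(nh_1^d))$ since both $\hat{\Xi}-\Xi$ and $\hat{\bm{\upsilon}}-\bm{\upsilon}$ are $O_p(\sqrt{|\log h_1|/(nh_1^d)})$ uniformly (the first by applying Lemma \ref{lm:gine} to the Euclidean class of quadratic kernels $\{\bm{r}((X-x)/h_1)\bm{r}((X-x)/h_1)^\top w((X-x)/h_1)\}$, the second by Theorem \ref{thm:stochastic}), while $\hat{\Xi}^{-1}$ remains bounded in probability by Lemma \ref{lm:inverse-matrix-well-defined}.

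Replacing $\hat{\Xi}^{-1}$ by $\Xi^{-1}$ in the middle term costs another error of the same order, and then combining what remains via $\bm{\upsilon} = \Xi H_1 \bar{\bm{\beta}}$ collapses the expansion to
\begin{align*}
H_1(\hat{\bm{\beta}} - \bar{\bm{\beta}}) = \Xi^{-1}\bigl[\hat{\bm{\upsilon}} - \hat{\Xi} H_1 \bar{\bm{\beta}}\bigr] + O_p\!\left(\tfrac{|\log h_1|}{nh_1^d}\right) = \Xi^{-1}\cdot\frac{1}{nh_1^d}\sum_{i=1}^n \bm{r}_i w_i \bigl[K_i - \bm{r}(X_i - x)^\top \bar{\bm{\beta}}\bigr] + O_p\!\left(\tfrac{|\log h_1|}{nh_1^d}\right),
\end{align*}
with shorthand $\bm{r}_i = \bm{r}((X_i - x)/h_1)$, $w_i = w((X_i-x)/h_1)$, $K_i = K((y-Y_i)/h_2)$. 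Adding and subtracting $\tilde{F}(y|X_i)$ inside the bracket splits this sum into the target $\frac{1}{nh_1^d}\sum_i \bm{s}(Y_i,X_i;y,x,h_1,h_2)$ plus a residual $-\frac{1}{nh_1^d}\sum_i \bm{r}_i w_i \Delta_i$, where $\Delta_i := \bm{r}(X_i - x)^\top \bar{\bm{\beta}}(y,x,h_1,h_2) - \tilde{F}(y|X_i)$. The theorem therefore reduces to showing $\sup_{y,x}|\frac{1}{nh_1^d}\sum_i \bm{r}_i w_i \Delta_i| = O_p(|\log h_1|/(nh_1^d))$.

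Two ingredients combine to deliver the sharper rate needed. First, the population FOC rewrites as $\mathbb{E}[\bm{r}_i w_i K_i] = \mathbb{E}[\bm{r}_i w_i \bm{r}_i^\top H_1 \bar{\bm{\beta}}]$, and combined with $\mathbb{E}[K_i | X_i] = \tilde{F}(y|X_i)$ this gives $\mathbb{E}[\bm{r}_i w_i \Delta_i] = 0$, so the sum is automatically centered. Second, Theorem \ref{thm:bias} gives $H_1(\bar{\bm{\beta}} - \bm{\beta}^*) = O(h_1^2 + h_2^2)$ uniformly, which together with a Taylor expansion of $F(y|\cdot)$ around $x$ and $\tilde{F} - F = O(h_2^2)$ yields $|\Delta_i| = O(h_1^2 + h_2^2) = O(h_1^2)$ on the support of $w_i$, using $h_2 = O(h_1)$. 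Hence $\mathbb{E}[|\bm{r}_i w_i \Delta_i|^2] = O(h_1^4 \cdot h_1^d) = O(h_1^{d+4})$, and Lemma \ref{lm:gine} applied to the $(y,x)$-indexed Euclidean class containing $\bm{r}_i w_i \Delta_i$, with variance bound $\sigma_n^2 = O(h_1^{d+4})$, produces $\sup_{y,x}|\sum_{i=1}^n \bm{r}_i w_i \Delta_i| = O_p(\sqrt{nh_1^{d+4}|\log h_1|} + |\log h_1|) = O_p(|\log h_1|)$ under the bandwidth assumption $nh_1^{d+4}/|\log h_1|$ bounded; dividing by $nh_1^d$ delivers exactly the claimed remainder.

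The critical step is this last remainder bound: it is only because the mean-zero property (from the FOC) and the $O(h_1^2)$ pointwise amplitude (from the LLR bias expansion) reinforce each other that the residual sharpens from the generic stochastic rate $\sqrt{|\log h_1|/(nh_1^d)}$ to the second-order rate $|\log h_1|/(nh_1^d)$; neither fact alone would suffice, and this is exactly where the undersmoothing hypothesis $nh_1^{d+4}/|\log h_1|$ bounded enters the argument. The remaining bookkeeping is verifying that the various $(y,x)$-indexed function classes built from translated and rescaled copies of $\bm{r}, w, K$, the monotone-in-$y$ CDF-like map $\tilde{F}(y|\cdot)$, and the smooth deterministic coefficient vector $\bar{\bm{\beta}}(y,x,h_1,h_2)$ are uniformly bounded and Euclidean; this follows from the standard closure properties of the Euclidean property under products and finite Lipschitz parameterizations, together with the bounded-variation structure imposed on $w$ in Assumption \ref{ass:kernel}.
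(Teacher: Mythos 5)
Your proof is correct and follows essentially the same strategy as the paper's, with the same two key ingredients: the first-order condition that centers the residual $\bm{r}_i w_i \Delta_i$, and the uniform bias expansion from Theorem \ref{thm:bias} that bounds its amplitude by $O(h_1^2)$, yielding the variance bound $\sigma_n^2 = O(h_1^{d+4})$ that, combined with $n h_1^{d+4}/|\log h_1|$ bounded, delivers the second-order remainder rate. The only difference is cosmetic: the paper begins directly from $H_1(\hat{\bm{\beta}}-\bar{\bm{\beta}}) = \hat{\Xi}^{-1}(\hat{\bm{\upsilon}}-\hat{\Xi}H_1\bar{\bm{\beta}})$ and splits into $\textit{err}_1$ (the centered bias residual) and $\textit{err}_2$ (the matrix-inverse replacement error), whereas you route through the resolvent identity $\hat{\Xi}^{-1}\hat{\bm{\upsilon}}-\Xi^{-1}\bm{\upsilon}$ before recombining to the same $\Xi^{-1}(\hat{\bm{\upsilon}}-\hat{\Xi}H_1\bar{\bm{\beta}})$; both bookkeeping schemes produce identical error terms and the same final bound. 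The one place where the paper takes more care than you do is in constructing an $n$-free uniformly bounded Euclidean superclass for the residual: since $\bar{\bm{\beta}}(y,x,h_1,h_2)$ depends on $n$ through the bandwidths, the paper replaces it by a free parameter $\bm{\beta}$ with $\|\bm{\beta}\|_2 \le C$ and inserts the indicator $\mathbf{1}\{|X_\ell-x_\ell|\le 1\}$ to preserve uniform boundedness (Lemmas \ref{lm:finite-dim-is-VC}, \ref{lm:conditional-expectation-preserves-Euclidean}, \ref{lm:sum-of-Euclidean-is-Euclidean}). Your closing remark that this ``follows from standard closure properties'' is correct in spirit, but if you wrote this up in full you would need to spell out exactly this superclass construction.
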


The asymptotic order of the remainder term, $O_p \left( |\log h_1| / nh_1^d \right)$, is the same as in Equation (13) of \cite{kong2010uniform}. Thus, once more, the uniformity over $y \in \mathbb{R}$ does not have an impact on the convergence rate. Combining the results in Theorem \ref{thm:bias} and \ref{thm:asymptotic-linear} and applying the central limit theorem for triangular arrays, we can show that the LLR estimator is asymptotic normal with some asymptotic bias.

We can use this asymptotic linear representation, together with the smoothness of $K$, to derive the following stochastic equicontinuity condition.
\begin{corollary}
	Let the assumptions of Theorem \ref{thm:asymptotic-linear} hold. Let $\delta_n = o(1)$. Then the following stochastic equicontinuity condition hold.
	\begin{align*}
		& \sup_{|y_1 - y_2| \leq \delta_n, x \in \mathcal{X}}\Big| \big(\hat{\bm{\beta}}_0(y_1,x,h_1,h_2) - \bar{\bm{\beta}}_0(y_1,x,h_1,h_2) \big) - \big(\hat{\bm{\beta}}_0(y_2,x,h_1,h_2) - \bar{\bm{\beta}}_0(y_2,x,h_1,h_2) \big) \Big| \\
		=&  O_p \left( \sqrt{\frac{\log h_1}{nh_1^d}} \frac{\delta_n}{h_2} + \frac{|\log h_1|}{nh_1^d} \right).
	\end{align*}
\end{corollary}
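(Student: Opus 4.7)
The plan is to apply Theorem \ref{thm:asymptotic-linear} separately at $y_1$ and $y_2$ and take the difference. Writing $D(y,x) := \hat{\bm{\beta}}_0(y,x,h_1,h_2) - \bar{\bm{\beta}}_0(y,x,h_1,h_2)$ and taking the first coordinate of the representation gives
\begin{align*}
D(y_1,x) - D(y_2,x) = \bm{e}_0^{\top}\Xi(x,h_1)^{-1}\frac{1}{nh_1^d}\sum_{i=1}^n \Delta_i(y_1,y_2,x) + O_p\!\left(\frac{|\log h_1|}{nh_1^d}\right),
\end{align*}
where $\Delta_i(y_1,y_2,x) = \bm{s}(Y_i,X_i;y_1,x,h_1,h_2) - \bm{s}(Y_i,X_i;y_2,x,h_1,h_2)$ and the remainder term is uniform in $(y,x)$ by Theorem \ref{thm:asymptotic-linear}, hence already contributes the second term in the claimed rate. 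By Lemma \ref{lm:inverse-matrix-well-defined}, $\|\Xi(x,h_1)^{-1}\|$ is uniformly bounded, so the task reduces to controlling the empirical process indexed by the class $\{\Delta(\cdot,\cdot;y_1,y_2,x) : |y_1-y_2|\le\delta_n,\ x\in\mathcal{X}\}$.

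Next I would exploit the smoothness of $K$ and $\tilde F$ to bound the envelope. Since $K'=k$ is bounded, the mean value theorem yields $|K((y_1-Y_i)/h_2)-K((y_2-Y_i)/h_2)| \le \|k\|_\infty\, \delta_n/h_2$. Under Assumption \ref{ass:Y}, $\partial_y\tilde F(y|x)=\int k(u)f(y-uh_2|x)\,du$ is bounded, so $|\tilde F(y_1|X_i)-\tilde F(y_2|X_i)|=O(\delta_n)$, which is dominated by the previous bound. Combining, each component of $\Delta_i$ is bounded by $C\,\delta_n/h_2$ times the indicator that $X_i\in B(x,h_1)$ (from the compact support of $w$). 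Because $\mathbb{E}[\bm{s}(Y,X;y,x,h_1,h_2)\mid X]=0$, each $\Delta_i$ is mean-zero, so the sum is a centered empirical process.

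For the variance, $\mathbb{E}\|\Delta_i(y_1,y_2,x)\|^2 \le C(\delta_n/h_2)^2\, \mathbb{P}(X\in B(x,h_1)) = O(h_1^d(\delta_n/h_2)^2)$, so one can take $\sigma_n^2 = C h_1^d(\delta_n/h_2)^2$ with $|\log \sigma_n| \asymp |\log h_1|$. The function class in question is a difference of two copies of the class used to prove Theorem \ref{thm:stochastic}, hence is contained in $\mathcal{G}-\mathcal{G}$ for a Euclidean $\mathcal{G}$, which is itself Euclidean with adjusted constants (bounded variation of $K$ and $w$, together with the stability of Euclidean classes under sums and products, delivers this). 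Applying Lemma \ref{lm:gine} gives
\begin{align*}
\sup_{|y_1-y_2|\le\delta_n,\, x\in\mathcal{X}} \left|\sum_{i=1}^n \Delta_i(y_1,y_2,x)\right| = O_p\!\left(\sqrt{n h_1^d (\delta_n/h_2)^2 |\log h_1|} + |\log h_1|\right),
\end{align*}
and dividing by $nh_1^d$ produces the first term in the claimed rate (the secondary $|\log h_1|/(nh_1^d)$ contribution is absorbed into the remainder term already present).

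The main obstacle is the bookkeeping for the Euclidean property of the difference class together with the correct envelope; once those are handled, the variance computation and the application of Lemma \ref{lm:gine} are routine. A minor subtlety is ensuring that the $\log\sigma_n$ factor is $\asymp|\log h_1|$ uniformly over the range of $\delta_n$ considered, which holds because $h_2=O(h_1)$ and $\delta_n=o(1)$, so $\log(\delta_n/h_2)=O(|\log h_1|)$.
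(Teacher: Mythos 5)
Your approach genuinely differs from the paper's. The paper applies the mean value theorem to the leading term of the asymptotic linear representation, pulling the factor $|y_1-y_2|\le\delta_n$ out front, and then applies Lemma \ref{lm:gine} to the class of $y$-derivatives $\frac{\partial}{\partial y}\bm{s}$; after factoring out $1/h_2$ that class has variance envelope $O(h_1^d)$ regardless of $\delta_n$, so $|\log\sigma_n|\asymp|\log h_1|$ is automatic. You instead apply Lemma \ref{lm:gine} directly to the difference class $\{\bm{s}(\cdot;y_1,x)-\bm{s}(\cdot;y_2,x)\}$ with a $\delta_n$-dependent variance envelope $\sigma_n^2 \asymp h_1^d(\delta_n/h_2)^2$. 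That route is legitimate in spirit, but it forces you to control $|\log\sigma_n|$, which now involves $\log(\delta_n/h_2)$, and this is exactly where your argument has a gap.

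The sentence ``which holds because $h_2=O(h_1)$ and $\delta_n=o(1)$, so $\log(\delta_n/h_2)=O(|\log h_1|)$'' is false. Neither hypothesis bounds $|\log\delta_n|$ by a multiple of $|\log h_1|$: take $\delta_n=e^{-n}$ and $h_1\asymp n^{-1/(d+4)}$, so $|\log\delta_n|\asymp n$ while $|\log h_1|\asymp\log n$. (Also note $h_2=O(h_1)$ is only an upper bound, so $|\log h_2|$ need not be $O(|\log h_1|)$ either.) With $\sigma_n^2$ that small, Lemma \ref{lm:gine} yields a bound involving $|\log\sigma_n|/(nh_1^d)$, which can dominate the claimed $|\log h_1|/(nh_1^d)$. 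The gap can be patched with a case split: if $\delta_n/h_2\ge h_1^2$, your choice of $\sigma_n^2$ already gives $|\log\sigma_n|\lesssim|\log h_1|$ and the claimed rate follows; if $\delta_n/h_2<h_1^2$, inflate the variance envelope to $\sigma_n^2=Ch_1^{d+4}$ (still a valid upper bound), so $|\log\sigma_n|\lesssim|\log h_1|$ and Lemma \ref{lm:gine} gives $O_p\bigl(\sqrt{h_1^4|\log h_1|/(nh_1^d)}+|\log h_1|/(nh_1^d)\bigr)$, which is $O_p\bigl(|\log h_1|/(nh_1^d)\bigr)$ under the standing assumption that $nh_1^{d+4}/|\log h_1|$ is bounded. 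Alternatively, switching to the paper's MVT decomposition sidesteps the issue entirely, since the $\delta_n$ never enters the variance computation. The rest of your argument --- the envelope bound via $\|k\|_\infty\delta_n/h_2$, the $O(h_1^d)$ probability factor, and the Euclidean property of the difference class via stability under sums and products --- is correct.
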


We study another simple example to demonstrate how the uniform asymptotic linear representation can be used. Suppose that $d=1$, $\mathcal{X} = [\underline{x},\bar{x}]$, and $Y$ is supported on $[\underline{y},\bar{y}]$. We want to estimate the integrated conditional distribution $\theta = \int_{\underline{y}}^{\bar{y}} \int_{\underline{x}}^{\bar{x}} F(y|x) dxdy$
with the estimator $\hat{\theta} = \int_{\underline{y}}^{\bar{y}} \int_{\underline{x}}^{\bar{x}} \hat{F}(y|x) dxdy$.
The theorem below gives the asymptotic distribution of the estimator $\hat{\theta}$.

\begin{corollary} \label{thm:example}
	Let Assumptions \ref{ass:X}, \ref{ass:Y}, and \ref{ass:kernel} hold. If the bandwidth satisfies that $\sqrt{n} h_1 / |\log h_1| \rightarrow \infty$, $\sqrt{n}h_1^2 \rightarrow 0$, and $h_2 = O(h_1)$, then $\sqrt{n} (\hat{\theta} - \theta) \overset{d}{\rightarrow} N(0,V)$,
	where 
	\begin{align*}
		V = \int \left( \int \left( \mathbf{1}\{s \leq y\} - F(y|t) \right) dy \right)^2 f(s,t) dtds,
	\end{align*}
	and $f(y,x)$ denotes the joint density of $(Y,X)$.
\end{corollary}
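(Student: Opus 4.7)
The plan is to decompose $\sqrt n(\hat\theta - \theta)$ into a bias and a stochastic part, annihilate the bias using Theorem~\ref{thm:bias}, linearize the stochastic part using the ALR in Theorem~\ref{thm:asymptotic-linear}, and close with a triangular-array central limit theorem.

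Define $\bar\theta = \int_{\underline y}^{\bar y}\int_{\underline x}^{\bar x} \bar{\bm\beta}_0(y,x,h_1,h_2)\,dx\,dy$. By Theorem~\ref{thm:bias}, $\sup_{(y,x)}|\bar{\bm\beta}_0(y,x,h_1,h_2) - F(y|x)| = O(h_1^2 + h_2^2)$, and integrating over the bounded rectangle gives $|\bar\theta - \theta| = O(h_1^2 + h_2^2) = O(h_1^2)$, using $h_2 = O(h_1)$. Under the bandwidth assumption $\sqrt n h_1^2 \to 0$, the bias contribution vanishes: $\sqrt n(\bar\theta - \theta) = o(1)$.

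For the stochastic part, apply Theorem~\ref{thm:asymptotic-linear} in its first-coordinate form and integrate over $[\underline y,\bar y]\times[\underline x,\bar x]$, swapping the finite integration with the sum:
\[
\sqrt n(\hat\theta - \bar\theta) = \frac{1}{\sqrt n}\sum_{i=1}^n \psi_{ni}(Y_i,X_i) + \sqrt n\,\tilde R_n,
\]
where
\[
\psi_{ni}(y,x) = \int_{\underline y}^{\bar y}\!\int_{\underline x}^{\bar x}\frac{1}{h_1}\,\bm{e}_0^{\top}\,\Xi(z,h_1)^{-1}\bm{r}\!\left(\tfrac{x-z}{h_1}\right)w\!\left(\tfrac{x-z}{h_1}\right)\!\left(K\!\left(\tfrac{w-y}{h_2}\right) - \tilde F(w|x)\right)\,dz\,dw.
\]
The integrated remainder satisfies $|\tilde R_n| \leq (\bar y - \underline y)(\bar x - \underline x)\sup_{y,x}|R_n| = O_p(|\log h_1|/(nh_1))$, so under $\sqrt n h_1/|\log h_1| \to \infty$ we have $\sqrt n\,\tilde R_n = o_p(1)$.

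The $\psi_{ni}(Y_i,X_i)$ are i.i.d.\ within each $n$, mean zero (since $\mathbb{E}[\bm{s}(Y,X;\cdot,\cdot,h_1,h_2)\mid X] \equiv 0$ by construction of $\tilde F$), and uniformly bounded via Lemma~\ref{lm:inverse-matrix-well-defined} together with the compact support and boundedness of $K$ and $w$. The Lindeberg condition therefore holds trivially, and it remains to verify $\mathrm{Var}(\psi_{ni}(Y_i,X_i))\to V$, at which point the Lindeberg--L\'evy CLT produces $\sqrt n(\hat\theta - \theta)\xrightarrow{d}N(0,V)$. Under the substitution $u=(x-z)/h_1$, the leading-order expansion $\bm{e}_0^{\top}\Xi(x,h_1)^{-1}\bm{r}(u) \to 1/f_X(x)$ for interior $x$ (together with the analogous boundary form guaranteed by Lemma~\ref{lm:inverse-matrix-well-defined}), the convergence $K((w-y)/h_2)\to\mathbf{1}\{y\le w\}$, and $\tilde F(w|x)\to F(w|x)$, collapse the inner integrals in $\psi_{ni}$ to $\int_{\underline y}^{\bar y}(\mathbf{1}\{Y_i\le w\} - F(w|X_i))\,dw$; the $O(h_1)$-measure strip of $x$ near $\partial\mathcal X$ contributes only $o(1)$ to the variance. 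The main obstacle is precisely this variance bookkeeping: tracking the $\Xi(x,h_1)^{-1}$ expansion uniformly in $x$, handling the boundary strip, and confirming that all $f_X$-dependent factors recombine to yield exactly the stated $V$.
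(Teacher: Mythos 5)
Your plan mirrors the paper's proof exactly: decompose into bias and stochastic parts, kill the bias with Theorem~\ref{thm:bias} and $\sqrt{n}h_1^2\to 0$, integrate the linearization from Theorem~\ref{thm:asymptotic-linear} and discard the remainder using $\sqrt{n}h_1/|\log h_1|\to\infty$, then finish with a triangular-array CLT. That structure is right. But the gap you flag at the end is genuine, and your write-up is internally inconsistent about it. After the substitution $u=(x-z)/h_1$ you correctly state $\bm{e}_0^\top\Xi(x,h_1)^{-1}\bm{r}(u)\to 1/f_X(x)$ --- indeed $\Xi(x,h_1)=f_X(x)\Omega(x,h_1)+o(1)$, and it is $\Omega$, not $\Xi$, that reduces to $\bm{I}$ on the interior --- yet two lines later you claim $\psi_{ni}$ collapses to $\int_{\underline y}^{\bar y}(\mathbf{1}\{Y_i\le w\}-F(w|X_i))\,dw$ with the $1/f_X$ factor gone. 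Carrying it through actually gives the influence function $\frac{1}{f_X(X_i)}\int_{\underline y}^{\bar y}(\mathbf{1}\{Y_i\le w\}-F(w|X_i))\,dw$, and hence a limiting variance weighted by $1/f_X(t)^2$, which is not the stated $V$. Nothing elsewhere in $Z_i$ compensates, so ``confirming that all $f_X$-dependent factors recombine to yield exactly the stated $V$'' cannot be done; they do not recombine.

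For context: the paper's own proof makes the same silent substitution, asserting that $\Xi(x,h_1)$ equals the identity on the interior of $\mathcal{X}$ before forming $Z_i$, which conflates $\Xi$ with $\Omega$. So your suspicion was well-founded, but the resolution is not to hope the bookkeeping works out. Either retain the factor and report $V=\int\bigl(\frac{1}{f_X(t)}\int(\mathbf{1}\{s\le y\}-F(y|t))\,dy\bigr)^2 f(s,t)\,dt\,ds$, or supply an explicit argument that eliminates it --- and under the standard LLR weighting used here there is none, since $\bm{s}$ carries no $f_X$ to cancel the $f_X(x)^{-1}$ coming out of $\Xi(x,h_1)^{-1}$.
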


\appendix

\numberwithin{equation}{section}
\numberwithin{lemma}{section}

\section{Proofs} \label{sec:proof}

\begin{proof} [Proof of Lemma 1]
	This lemma is almost the same as Lemma 11 in \cite{fan2016multivariate}. The only difference is that in this paper we allow the kernel to diminish at the boundary of the support but the proof of \cite{fan2016multivariate} nonetheless goes through. In fact, following their steps, we can show that the eigenvalues of $\Xi(x,h_1)$ and $\Omega(x,h_1)$ are larger than 
	\begin{align*}
		\inf_{x \in B(0,1)} \min_{b^\top b = 1} b^\top \left( \int \bm{r}(u) \bm{r}(u)^\top w(u) \mathbf{1}\{ u \in B(x,\lambda_1) \} du \right) b,
	\end{align*}
	which is strictly positive since $w>0$ on $[-1,1]^d$. 
\end{proof}

\begin{proof} [Proof of Theorem 1]
	By the standard change of variables and the law of iterated expectations, we can write
	\begin{align*}
		\Xi(x,h_1) & = \int \bm{r}(u) \bm{r}(u)^\top w(u) f_X(x+ h_1 u) du, \\
		\bm{\upsilon}(y,x,h_1,h_2) & = \int \bm{r}(u) w(u) \tilde{F}(y|x + h_1u) f_X(x + h_1 u) du,
	\end{align*}
	where $\tilde{F}(y|x ) = \mathbb{E}[ K( (y - Y)/h_2 ) \mid X=x ]$. 
	Because $f_X$ is continuously differentiable on $\mathcal{X}$, we have $\Xi(x,h_1) = f_X(x) \Omega(x,h_1) + o(1),$
	uniformly over $x \in \mathcal{X}$.
	Applying change of variables and integration by parts to $\tilde{F}(y|x + h_1 u)$, we have
	\begin{align*}
		\tilde{F}(y|x + h_1 u) & = \int K( (y - y')/h_2 ) f(y' | x+ h_1 u) dy' \\
		& = \int K (v) f(y - h_2 v | x+ h_1 u) h_2 dv \\
		& = \int k(v) F(y - h_2 v | x+ h_1 u) dv.
	\end{align*}
	By Assumption \textbf{Y}, $F(y|x)$ restricted to $\mathbb{R} \times \mathcal{X}$ is twice uniformly continuously differentiable. Then for any $y \in \mathbb{R}$, the following expansion holds:
\begin{align*}
	F(y - h_2 v | x+h_1u) & = F(y| x+ h_1 u) - \frac{\partial }{\partial y} F(y|x+ h_1 u) h_2 v + \frac{1}{2} \frac{\partial^2 }{\partial y^2} F(y|x+ h_1 u) h_2^2 v^2 \\
	& + \frac{1}{2} h_2^2 \left( \frac{\partial^2 }{\partial y^2} F(\tilde{y}|x+ h_1 u) v^2 - \frac{\partial^2 }{\partial y^2} F(y|x+ h_1 u) v^2 \right),
\end{align*}
where $\tilde{y}$ is between $y$ and $y - h_2 v$.
Therefore, 
\begin{align*}
	\tilde{F}(y|x + h_1 u) = F(y | x + h_1 u) + \frac{h_2^2}{2} \frac{\partial^2 }{\partial y^2} F(y|x+ h_1 u) \int v^2 k(v) dv + o(h_2^2),
\end{align*}
uniformly over $y \in \mathbb{R}$ and $x+ h_1 u \in \mathcal{X}$. The remainder term is uniformly $o(h_2^2)$ because $\frac{\partial^2 }{\partial y^2} F$ is a continuous function on the compact set $\textit{supp}(Y,X)$.
Next, by the smoothness of $F(y | x)$ with respect to $x$, we have
\begin{align}
	F(y | x + h_1 u) & = F(y | x) + h_1 u^\top\nabla_x F(y | x) + \frac{h_1^2}{2}  u^\top [\nabla_x^\top \nabla_x F(y | \tilde{x})] u \nonumber \\
	& = \bm{r}(u)^\top H_1 \bm{\beta}^* (y,x) + \frac{h_1^2}{2}  u^\top [\nabla_x^\top \nabla_x F(y | x)] u \nonumber \\
	& + \frac{h_1^2}{2} u^\top\left(  [\nabla_x^\top \nabla_x F(y | \tilde{x})] u -  [\nabla_x^\top \nabla_x F(y | x)] u \right) u \nonumber \\
	& = \bm{r}(u)^\top H_1 \bm{\beta}^* (y,x) + \frac{h_1^2}{2}  u^\top[\nabla_x^\top \nabla_x F(y | x)] u + o(h_1^2), \label{eqn:bias-F-x}
\end{align}
uniformly over $y \in \mathbb{R}$ and $x,x+ h_1 u \in \mathcal{X}$. The remainder term is uniformly $o(h_1^2)$ because $\nabla_x^\top \nabla_x F$ is assumed to be uniformly continuous on $\mathbb{R} \times \mathcal{X}$.
Similarly, we have
\begin{align}
	f_X(x + h_1 u) & = f_X(x) + o(1), \label{eqn:bias-f-x} \\
	\frac{\partial^2 }{\partial y^2} F(y|x + h_1 u) & = \frac{\partial^2 }{\partial y^2} F(y|x ) + o(1), \label{eqn:bias-F-y}
\end{align}
uniformly over $y \in \mathbb{R}$ and $x, x+ h_1 u \in \mathcal{X}$. Therefore,
\begin{align*}
	\bm{\upsilon}(y,x,h_1,h_2) & = \Xi(x,h_1) H_1 \bm{\beta}^*(y,x) + \frac{h_1^2}{2} f_X(x) \int \bm{r}(u) w(u) u^\top[\nabla_x^\top \nabla_x F(y | x)] u \mathbf{1}\{x + h_1 u \in \mathcal{X}\}  du \\
	& + \frac{h_2^2}{2} \frac{\partial^2 }{\partial y^2} F(y|x) f_X(x)\int v^2 k(v) dv \int \bm{r}(u) w(u) \mathbf{1}\{x + h_1 u \in \mathcal{X}\} du + o(h_1^2 + h_2^2),
\end{align*}
uniformly over $y \in \mathbb{R}$ and $x \in \mathcal{X}$. Therefore,
\begin{align*}
	& H_1 (\bar{\bm{\beta}}(y,x,h_1,h_2) - \bm{\beta}^*(y,x)) \\
	= & \frac{h_1^2}{2} \Omega(x,h_1)^{-1} \sum_{\ell,\ell' = 1}^d \frac{\partial^2 }{\partial x_\ell \partial x_{\ell'}} F(y|x) \int \bm{r}(u) u_{\ell} u_{\ell'} w(u) \mathbf{1}\{x + h_1 u \in \mathcal{X}\} du \\
	+ & \frac{h_2^2}{2} \Omega(x,h_1)^{-1} \frac{\partial^2 }{\partial y^2} F(y|x) \int v^2 k(v) dv \int \bm{r}(u) w(u) \mathbf{1}\{x + h_1 u \in \mathcal{X}\} du + o(h_1^2 + h_2^2).
\end{align*}
Then the first claim of the theorem follows. 

When $x \in \mathring{\mathcal{X}}_{h_1}$, $x + h_1 u \in \mathcal{X}$ for all $u \in [-1,1]^d$. In that case, $\Omega(x,h_1)$ becomes the identity matrix because $w_\ell$ is symmetric and has variance one. 
Then the second claim of the theorem follows.

\end{proof}

\begin{proof} [Proof of Lemma 2]
	Let $M > 0$ be the uniform bound of $\mathcal{G}$. Notice that each $\mathcal{G}_n$ is a uniformly bounded (by $M$) Euclidean class with the same coefficients $(A,v)$. Denote 
    \begin{align*}
        \Delta_n^o = \sup_{f \in \mathcal{G}_n}  \left| \sum_{i=1}^n \textit{Rad}_i f(X_i) \right|,
    \end{align*}
    where $\textit{Rad}_i, 1 \leq i \leq n,$ is a sequence of iid Rademacher variables.
    By Proposition 2.1 in \cite{GINE2001consistency}, we have for all $n \geq 1$,
    \begin{align*}
        \mathbb{E} \Delta_n^o \leq C \left( v M \log (AM/\sigma_n) + \sqrt{v} \sqrt{n \sigma_n^2 \log (AM/\sigma_n)} \right) = O\left( \sqrt{n \sigma_n^2 |\log \sigma_n|} + |\log \sigma_n| \right).
    \end{align*}
    By the symmetrization result in, for example, Lemma 2.3.1 of \cite{wellner1996}, we know that $\mathbb{E} \Delta_n \leq 2 \mathbb{E} \Delta_n^o = O\left( \sqrt{n \sigma_n^2 |\log \sigma_n|} + |\log \sigma_n| \right).$
    Then the claimed result follows from the Chebyshev inequality.
\end{proof}

\begin{proof} [Proof of Theorem 2]
	
	We proceed with two steps. Recall the expression of $H_1 \hat{\bm{\beta}}$ in Equation (1). In Step 1, we derive the uniform convergence rate of the numerator $\hat{\bm{\upsilon}}(y,x,h_1,h_2)$. In Step 2, we derive the uniform convergence rate of the denominator $\hat{\Xi}(x,h_1)$.

	\bigskip
	\noindent \textbf{Step 1.} 
	To avoid repetition in the proof, we consider a generic element of the vector $\hat{\bm{\upsilon}}(y,x,h_1,h_2)$:
	\begin{align} \label{eqn:v-hat-def}
		\hat{\bm{\upsilon}}_{\pi}(y,x,h_1,h_2) & = \frac{1}{nh_1^d} \sum_{i=1}^n ((X_{i} - x)/h_1)^\pi K\left( \frac{y - Y_i}{h_2} \right) w \left( \frac{X_i - x}{h_1} \right), \nonumber \\
		& = \frac{1}{nh_1^d} \sum_{i=1}^n K\left( \frac{y - Y_i}{h_2} \right) \prod_{\ell = 1}^d ((X_{i \ell} - x_{\ell})/h_1)^{\pi_\ell} w_\ell \left( \frac{X_{i \ell} - x_{\ell} }{h_1} \right),
	\end{align}
	where $\pi = (\pi_1,\cdots,\pi_d), \pi_\ell \in \{0,1\}, \sum \pi_\ell \leq 1.$
	We want to derive the following uniform convergence rate of $\hat{\bm{\upsilon}}_{\ell}(y,x,h_1,h_2)$:
	\begin{align} \label{eqn:numerator-variance-rate}
		\sup_{y \in \mathbb{R}, x \in \mathcal{X}} \left \lvert \hat{\bm{\upsilon}}_{\ell}(y,x,h_1,h_2) - \mathbb{E} \hat{\bm{\upsilon}}_{\ell}(y,x,h_1,h_2) \right \rvert = O_p\left( \sqrt{|\log h_1|/(nh_1^d)} \right) .
	\end{align}
	By defining
	\begin{align*}
		\psi_n(Y,X;y,x) = K\left( \frac{y - Y_i}{h_2} \right) \prod_{\ell = 1}^d ((X_{i \ell} - x_{\ell})/h_1)^{\pi_\ell} w_\ell \left( \frac{X_{i \ell} - x_{\ell} }{h_1} \right)
	\end{align*}
	and $\Psi_n = \{ \psi_n(\cdot,\cdot;y,x): y \in \mathbb{R}, x \in \mathcal{X} \}$, we can write the LHS of (\ref{eqn:numerator-variance-rate}) as
	\begin{align*}
		\sup_{\psi_n \in \Psi_n} \left| \frac{1}{nh_1^d} \sum_{i=1}^n (\psi_n(Y_i,X_i;y,t) - \mathbb{E} \psi_n(Y_i,X_i;y,t)) \right|,
	\end{align*}
	which can be studied with the empirical process theory introduced previously. Notice that $\psi_n$ and $\Psi_n$ depend on $n$ through the bandwidth $h_1$ and $h_2$.

	Consider a larger class $\Psi$ that does not depend on $n$ defined by the following product:
	\begin{align*}
		\Psi = \Psi_Y \Psi_{X_1} \Psi_{X_2} \cdots \Psi_{X_d},
	\end{align*}
	where
	\begin{align*}
		\Psi_Y & = \{(Y,X) \mapsto K\left( (y-Y)/h \right) : y \in \mathbb{R}, h>0 \}, \\
		\Psi_{X_\ell} & = \{ (Y,X) \mapsto \left( (X_\ell - x_\ell)/h \right)^{\pi_\ell}w_\ell\left( (X_\ell - x_\ell)/h \right)  : x \in \mathcal{X}, h>0 \}, \ell = 1,\cdots,d.
	\end{align*}
	For all $n \geq 1$, $\Psi_{n} $ is a subset of the product class $\Psi $. Then we want to show that $\Psi$ is uniformly bounded and Euclidean. If that is true, then we can appeal to Lemma 2.
	
	In view of Lemma \ref{lm:product-of-Euclidean-is-Euclidean}, we only need to show that $\Psi_{Y}$ and $ \Psi_{X_\ell}$ are uniformly bounded and Euclidean. The class $\Psi_{Y}$ is uniformly bounded by $1$. The function $K$ is of bounded variation on $\mathbb{R}$ since it is the integral of the integrable function $k$ (Corollary 3.33 in \cite{folland1999real}). Then by Lemma \ref{lm:BV-VC}, we know that $\Psi_Y$ is Euclidean. 
	The class $\Psi_{X_\ell}$ is uniformly bounded by $\lVert w_\ell \rVert_\infty$. This is because $w_\ell$ is support on $[-1,1]$ and hence the term in front of $w_\ell$, $(X_\ell - x_\ell)/h$, cannot exceed one in magnitude. To show that $\Psi_{X_\ell}$ is Euclidean, notice that the function $u_\ell \mapsto u_\ell^{\pi_\ell} w_\ell(u_\ell)$ is of bounded variation. This is because on the support of $w_\ell$, $[-1,1]$, both $u_\ell \mapsto u_\ell^{\pi_\ell}$ and $w_\ell$ are of bounded variation. Then their product is also of bounded variation (Theorem 6.9, \citet{mathematical-analysis}). Then we know $\Psi_{X_\ell}$ is Euclidean by appealing to Lemma \ref{lm:BV-VC}.

	Next, we want to derive a uniform variance bound for each $\Psi_n$. By the standard change of variables, we know that $\sup_{ \psi_n \in \Psi_n }  \mathbb{E} [\psi_n(Y,X;y,x)^2]$ is bounded by
	\begin{align*}
		\sup_{x \in \mathcal{X} } \mathbb{E} \left[ w\left( \frac{X-x}{h_1} \right)^2  \right] \leq \sup_{x \in \mathcal{X} } h_1^d \int w(u)^2 f_{X}(x+h_1 u) du \leq h_1^d \lVert f_{X} \rVert_\infty \prod_{\ell = 1}^d \norm{w_\ell}_\infty ,
	  \end{align*}
	where we have used the fact that $K \in [0,1]$, and $w_\ell$ is supported on $[-1,1]$ and integrates to $1$. Therefore, we can define $\sigma^2_{\Psi_n} = h_1^d \lVert f_{X} \rVert_\infty \prod_{\ell = 1}^d \norm{w_\ell}_\infty$ as a uniform variance bound for $\Psi_n$. Under the assumption that $n h_1^d / |\log h_1| \rightarrow \infty$, we can apply Lemma 2 to the sequence $\Psi_n$ and obtain that
	\begin{align*}
		\sup_{\psi_n \in \Psi_n} \left| \frac{1}{nh_1^d} \sum_{i=1}^n (\psi_n(Y_i,X_i;y,x) - \mathbb{E} \psi_n(Y_i,X_i;y,x) \right| &= O_p \left( \frac{\sqrt{n \sigma_{\Psi_n}^2 |\log \sigma_{\Psi_n}|}}{nh_1^d}\right)  = O_p \left( \sqrt{ \frac{|\log h_1|}{nh_1^d} }\right),
	\end{align*}
	which is the desired result specified in Equation (\ref{eqn:numerator-variance-rate}).

	\bigskip
\noindent \textbf{Step 2.}
Following the same procedure as in Step 1, we can show that the uniform convergence rate for each element of the matrix $\hat{\Xi}(x,h_1)$ is also $\sqrt{|\log h_1|/(nh_1^d)}$. We omit the details for brevity. Then by Lemma 1, we know that with probability approaching one, the eigenvalues of $\hat{\Xi}(x,h_1)$ is in $[1/C,C]$. In particular, with probability approaching one, the inverse matrix $\hat{\Xi}(x,h_1)^{-1}$ is well-defined, and its induced 2-norm $\left\lVert \hat{\Xi}(x,h_1)^{-1} \right\rVert_2$ is bounded.
    Then applying Lemma 1 once again, we have
    \begin{align*}
        \sup_{x \in \mathcal{X}} \left\lVert \hat{\Xi}(x,h_1)^{-1} - \Xi(x,h_1)^{-1} \right\rVert_2 & = \sup_{x \in \mathcal{X}} \left\lVert \hat{\Xi}(x,h_1)^{-1} (\Xi(x,h_1) - \hat{\Xi}(x,h_1)) \Xi(x,h_1)^{-1} \right\rVert_2 \\
        & \leq \sup_{x \in \mathcal{X}} \left\lVert \hat{\Xi}(x,h_1)^{-1} \right\rVert_2 \left\lVert \Xi(x,h_1) - \hat{\Xi}(x,h_1) \right\rVert_2 \left\lVert \Xi(x,h_1)^{-1} \right\rVert_2  \\
        & = O_p\left( \sqrt{|\log h_1|/(nh_1^d)} \right),
    \end{align*}
    where the second line follows from the submultiplicativity of the induced 2-norm.
    Combing the above result with Step 1, we obtain
   \begin{align*}
        & \sup_{y \in \mathbb{R}, x \in \mathcal{X}} \left\lVert \hat{\Xi}(x,h_1)^{-1} \hat{\bm{\upsilon}}(y,x,h_1,h_2) - \Xi(x,h_1)^{-1} \bm{\upsilon}(y,x,h_1,h_2) \right\rVert_2 \\
        \leq & \sup_{y \in \mathbb{R}, x \in \mathcal{X}} \left\lVert \hat{\Xi}(x,h_1)^{-1} - \Xi(x,h_1)^{-1} \right\rVert_2 \left\lVert \hat{\bm{\upsilon}}(y,x,h_1,h_2) \right\rVert_2 \\
		+& \sup_{y \in \mathbb{R}, x \in \mathcal{X}} \left\lVert  \hat{\bm{\upsilon}}(y,x,h_1,h_2)- \bm{\upsilon}(y,x,h_1,h_2)\right\rVert_2 \left\lVert \Xi(x,h_1)^{-1} \right\rVert_2 \\
        = & O_p\left( \sqrt{|\log h_1|/(nh_1^d)} \right),
    \end{align*}
    where the last line uses the fact that $\hat{v}$ is uniformly bounded. This proves Equation (5).

\end{proof}

\begin{proof}[Proof of Theorem 3]
	For the unsmoothed estimator, we can now define the pseudo-true value by replacing the term $K((y-Y_i)/h_2)$ with the term $\mathbf{1}\{Y_i \leq y\}$ for the minimization problem defined in (3) in the main text. To derive the bias term, we can follow the proof of Theorem 1 and replace $\tilde{F}$ with $F$ in the definition of $\bm{\upsilon}$. Therefore, the bias term in this case can be controlled by using (\ref{eqn:bias-F-x}) and (\ref{eqn:bias-f-x}) without (\ref{eqn:bias-F-y}), which means that we no longer require the differentiability of $F$ with respect to $y$. The bias term associated with $h_2$ (as in Theorem 1) no longer exist. The remaining bias is $O(h_1^2)$. The stochastic term can be dealt with by using the proof of Theorem 2. We replace the class $\Psi_Y$ by the class of indicator functions $\mathbf{1}\{Y_i \leq y\}, y \in \mathbb{R}$, which is also uniformly bounded by $1$ and is Euclidean. The other parts of the proof remain the same.
\end{proof}

\begin{proof} [Proof of Theorem 4]
	Notice that we can write $H_1 \left(\hat{\bm{\beta}}(y,x,h_1,h_2) - \bar{\bm{\beta}}(y,x,h_1,h_2) \right)$ as
	\begin{align*}
		& \hat{\Xi}(x,h_1)^{-1} \left(\hat{\bm{\upsilon}}(y,x,h_1,h_2) - \hat{\Xi}(x,h_1) H_1 \bar{\bm{\beta}}(y,x,h_1,h_2) \right) \\
		= & \hat{\Xi}(x,h_1)^{-1} \frac{1}{nh_1^d } \sum_{i=1}^n \bm{s}(Y_i,X_i;y,x,h_1,h_2) \\
		+ & \hat{\Xi}(x,h_1)^{-1} \frac{1}{nh_1^d} \sum_{i=1}^n \bm{r}\left( \frac{X_i - x}{h_1} \right) \left( \tilde{F}(y|X_i) - \bm{r}\left( \frac{X_i - x}{h_1} \right)^\top H_1 \bar{\bm{\beta}}(y,x,h_1,h_2) \right) w \left( \frac{X_i - x}{h_1} \right) \\
		= & \Xi(x,h_1)^{-1} \frac{1}{nh_1^d } \sum_{i=1}^n\bm{s}(Y_i,X_i;y,x,h_1,h_2) + \textit{err}_1(y,x) + \textit{err}_2(y,x)
	\end{align*}
	where
	\begin{align*}
		\textit{err}_1(y,x) & = \hat{\Xi}(x,h_1)^{-1} \frac{1}{nh_1^d} \sum_{i=1}^n \bm{r}\left( \frac{X_i - x}{h_1} \right) \left( \tilde{F}(y|X_i) - \bm{r}\left( \frac{X_i - x}{h_1} \right)^\top H_1 \bar{\bm{\beta}}(y,x,h_1,h_2) \right) w \left( \frac{X_i - x}{h_1} \right),\\
		\textit{err}_2(y,x) & = \left( \hat{\Xi}(x,h_1)^{-1} - \Xi(x,h_1)^{-1} \right) \frac{1}{nh_1^d } \sum_{i=1}^n \bm{s}(Y_i,X_i;y,x,h_1,h_2).
	\end{align*}
	
	We use the empirical process theory to derive the uniform convergence rates of $\textit{err}_1$ and $\textit{err}_2$ respectively in the following Step 1 and Step 2.

	\bigskip
	\noindent \textbf{Step 1.}
	Define a sequence of function classes $\Phi_n = \{ \phi_n(\cdot,\cdot;y,x) : y \in \mathbb{R}, x \in \mathcal{X} \}$, where
	\begin{align*}
		\phi_n(Y,X;y,x) = \left( \tilde{F}(y|X) - \bm{r}\left( \frac{X - x}{h_1} \right)^\top H_1 \bar{\bm{\beta}}(y,x,h_1,h_2) \right) \prod_{\ell = 1}^d \left( \frac{X_{\ell} - x_{\ell}}{h_1} \right)^{\pi_\ell} w_\ell \left( \frac{X_{\ell} - x_{\ell}}{h_1} \right)
	\end{align*}
	with $\sum \pi_\ell \leq 1$ as before.
	We want to derive the convergence rate of
	\begin{align*}
		\sup_{\phi_n \in \Phi_n} \left| \frac{1}{nh_1^d} \sum_{i=1}^n \phi_n(Y_i,X_i;y,x) \right|.
	\end{align*}
	Notice that $\phi_n$ is already centered, that is, $\mathbb{E}\phi_n(Y,X;y,x) = 0$, by the first-order condition of (2). Define a larger product class $\Phi$ that does not vary with $n$ by $\Phi = \Phi_Y \Psi_{X_1}  \Psi_{X_2} \cdots  \Psi_{X_d}$,
	where $\Psi_{X_\ell}$ is defined in the proof of Theorem 2 and
	\begin{align*}
		\Phi_Y & = \{ (Y,X) \mapsto ( \mathbb{E} [ K((y-Y)/h) |X]  - \bm{r}(X - x)^\top\bm{\beta} \\
		&  \hspace{8em} \times \mathbf{1}\{ |X_\ell - x_\ell| \leq 1, 1 \leq \ell \leq d \} : y \in \mathbb{R},x \in \mathcal{X}, h>0, \norm{\bm{\beta}}_2 \leq C \} .
	\end{align*}
	To understand the expression of $\Phi_{Y}$, recall that by definition $\tilde{F}(y|X) = \mathbb{E} [ K((y-Y)/h_2) \mid X]$. The term $\bar{\bm{\beta}}(y,x,h_1,h_2)$ is replaced by a general $\bm{\beta} \in \mathbb{R}^{d+1}$ with a bounded norm. This can be done as both the numerator and denominator of $\bar{\bm{\beta}}(y,x,h_1,h_2)$ is bounded. The indicator term $\mathbf{1}\{ |X_\ell - x_\ell| \leq 1, 1 \leq \ell \leq d \}$ comes from the support of $w$. This indicator term is needed for deriving the uniform boundedness. 
	
	For each $n$, we have $\Phi_n \in \Phi$. We want to show that $\Phi$ is a uniformly bounded Euclidean class. Since $\Psi_{X_\ell}$ is proven to be uniformly bounded and Euclidean in Theorem 2, we only need to focus on the class $\Psi_Y$.
	First notice that the class
	\begin{align*}
		\left\{ (Y,x) \mapsto \bm{r}(X - x)^\top\bm{\beta} \mathbf{1}\{ |X_\ell - x_\ell| \leq 1, 1 \leq \ell \leq d \} : y \in \mathbb{R},x \in \mathcal{X}, h>0, \norm{\bm{\beta}}_2 \leq C \right\} 
	\end{align*}
	is uniformly bounded and Euclidean in view of Lemma \ref{lm:finite-dim-is-VC}. By Lemma \ref{lm:conditional-expectation-preserves-Euclidean}, we know that the following class is uniformly bounded and Euclidean:
	\begin{align*}
		\Big\{ (Y,X) \mapsto \mathbb{E} [ K((y-Y)/h) \mid X] \mathbf{1}\{ |X_\ell - x_\ell| \leq 1, 1 \leq \ell \leq d \} : y \in \mathbb{R},  h>0 \Big\}
	\end{align*}
	Then by Lemma \ref{lm:sum-of-Euclidean-is-Euclidean}, we know that $\Phi_{Y}$ is uniformly bounded and Euclidean. Hence, $\Phi$ is uniformly bounded and Euclidean.
	
	Then we want to derive a variance bound for each $\Phi_n$. By the usual change of variables, we have for any $y \in \mathbb{R}$ and $x \in \mathcal{X}$,
	\begin{align*}
	  \mathbb{E}[\phi_n(Y,x;y,x)^2] \leq h_1^d \int \left( \tilde{F}_{Y | X}(y | x + h_1 u) - \bm{r}(u) H_1 \bar{\bm{\beta}}(y,x,h_1,h_2) \right)^2  w(u)^2 f_X(x + h_1u) du .
	\end{align*}
	From the uniform bias expansion results in Theorem 1, we have
	\begin{align*}
		\sup_{y \in \mathbb{R},x \in \mathcal{X}} \left| \tilde{F}_{Y | X}(y | x + h_1 u) - \bm{r}(u) H_1 \bar{\bm{\beta}}(y,x,h_1,h_2) \right| = O(h_1^2 + h_2^2) = O(h_1^2).
	\end{align*}
	Therefore, we can construct a uniform variance bound $\sigma^2_{\Phi_n} = O(h_1^{d + 4})$ for the class $\Phi_n$.
	Then by Lemma 2, we can show that
	\begin{align*}
		\sup_{\phi_n \in \Phi_n} \left| \frac{1}{nh_1^d} \sum_{i=1}^n \phi_n(Y_i,X_i;y,x) \right| & = O_p \left( \left(\sqrt{ nh_1^{d+4} |\log h_1| } + |\log h_1| \right)/(n h_1^d) \right) = O_p\left( \frac{|\log h_1|}{nh_1^d} \right),
	\end{align*}
	where the second line follows from the assumption that $n h_1^{d+4} /|\log h_1| \leq C$. Therefore, 
	\begin{align*}
		\sup_{y \in \mathbb{R}, x \in \mathcal{X}} \lVert \textit{err}_1(y,x) \rVert_2 = \sup_{x \in \mathcal{X}} \left\lVert \hat{\Xi}(x,h_1)^{-1} \right\rVert_2  O_p\left( |\log h_1| / (nh_1^d) \right) = O_p\left( |\log h_1| / (nh_1^d) \right) .
	\end{align*}
	
	\bigskip
	\noindent \textbf{Step 2.}
	Similar as before, we can show that
	\begin{align*}
		\sup_{y \in \mathbb{R}, x \in \mathcal{X}} \left\lVert \frac{1}{nh_1^d} \sum_{i=1}^n \bm{s}(Y_i,X_i;y,x,h_1,h_2)\right\rVert_2 = O_p\left( \sqrt{|\log h_1| / (nh_1^d)} \right).
	\end{align*}
	It is straightforward to see that the summand is centered, and the relevant function classes are uniformly bounded and Euclidean. For the variance bound, we can simply bound the term $\left( K\left((y-Y_i)/h_2 \right) - \tilde{F}(y |X_i) \right)^2$ by $1$. We omit the details of the derivation.
	Then by the uniform convergence rate of $\hat{\Xi}(x,h_1)^{-1}$ derived in the proof of Theorem 2, we have
	\begin{align*}
		\sup_{y \in \mathbb{R}, x \in \mathcal{X}} |\textit{err}_2(y,x,h_1,h_2)| = O_p\left( |\log h_1| / (nh_1^d) \right) .
	\end{align*}
	Therefore, we have shown that both the terms $\textit{err}_1(y,x)$ and $\textit{err}_2(y,x)$ are $O_p\left( |\log h_1| / (nh_1^d) \right)$ uniformly. Then the desired result follows.
	
	\end{proof}

	\begin{proof} [Proof of Corollary 2]
		By the asymptotic linear representation in Theorem 4 and the mean value theorem, we have 
		\begin{align*}
			& \sup_{|y_1 - y_2| \leq \delta_n, x \in \mathcal{X}}\Big| \big(\hat{\bm{\beta}}_0(y_1,x,h_1,h_2) - \bar{\bm{\beta}}_0(y_1,x,h_1,h_2) \big) - \big(\hat{\bm{\beta}}_0(y_2,x,h_1,h_2) - \bar{\bm{\beta}}_0(y_2,x,h_1,h_2) \big) \Big| \\
			=  & \sup_{|y_1 - y_2| \leq \delta_n, x \in \mathcal{X}} \Big| \bm{e}_0^\top\Xi(x,h_1)^{-1} \frac{1}{nh_1^d} \sum_{i=1}^n (\bm{s}(Y_i,X_i;y_1,x,h_1,h_2) - \bm{s}(Y_i,X_i;y_2,x,h_1,h_2)) \Big| \\
			& + O_p \left( \frac{|\log h_1|}{nh_1^d} \right) \\
			\leq & C \delta_n \sup_{y \in \mathbb{R}, x \in \mathcal{X}} \Big\lVert\frac{1}{nh_1^d} \sum_{i=1}^n \frac{\partial}{\partial y } \bm{s}(Y_i,X_i;y,x,h_1,h_2) \Big\rVert_2 + O_p \left( \frac{|\log h_1|}{nh_1^d} \right),
		\end{align*}
		where we have used the fact that $\lVert \Xi(x,h_1)^{-1} \rVert_2$ is bounded (Lemma 1). The partial derivative $\frac{\partial}{\partial y } \bm{s}$ is equal to  
		\begin{align*}
			\frac{\partial}{\partial y } \bm{s}(Y_i,X_i;y,x,h_1,h_2) = \frac{1}{h_2} \bm{r}\left( \frac{X_i - x}{h_1} \right) \left( k\left( (y - Y_i)/h_2  \right) - \mathbb{E} \left[ k\left( (y - Y_i)/h_2  \right) \mid X_i \right] \right)  w \left( \frac{X_i - x}{h_1} \right).
		\end{align*}
		Similar as before, we can use Lemma 2 to show that
		\begin{align*}
			& \sup_{y \in \mathbb{R}, x \in \mathcal{X}} \Big\lVert\frac{1}{nh_1^d} \sum_{i=1}^n \bm{r}\left( \frac{X_i - x}{h_1} \right) \left( k\left( (y - Y_i)/h_2  \right) - \mathbb{E} \left[ k\left( (y - Y_i)/h_2  \right) \mid X_i \right] \right)  w \left( \frac{X_i - x}{h_1} \right) \Big\rVert_2 \\
			= & O_p \left( \sqrt{\frac{| \log h_1|}{nh_1^d}} \right).
		\end{align*}
		We omit the details here. It then follows that 
		\begin{align*}
			\sup_{y \in \mathbb{R}, x \in \mathcal{X}} \Big\lVert\frac{1}{nh_1^d} \sum_{i=1}^n \frac{\partial}{\partial y } \bm{s}(Y_i,X_i;y,x,h_1,h_2) \Big\rVert_2 = O_p \left( \sqrt{\frac{| \log h_1|}{nh_1^d}} \frac{1}{h_2} \right).
		\end{align*}
		This proves the corollary.
	\end{proof}

	\begin{proof} [Proof of Corollary 3]
		Consider the following bias-variance decomposition of $\hat{\theta} - \theta$:
		\begin{align*}
			 \underbrace{\int_{\underline{y}}^{\bar{y}} \int_{\underline{x}}^{\bar{x}} \left( \bar{\bm{\beta}}_0(y,x,h_1,h_2) - \bm{\beta}^*_0(y,x,h_1,h_2) \right) dxdy}_{\text{bias term}} + \underbrace{\int_{\underline{y}}^{\bar{y}} \int_{\underline{x}}^{\bar{x}} \left( \hat{\bm{\beta}}_0(y,x,h_1,h_2) - \bar{\bm{\beta}}_0(y,x,h_1,h_2) \right) dxdy}_{\text{stochastic term}}
		\end{align*}
		By Theorem 1 and the assumption that $\sqrt{n}h_1^2 = o(1)$, we know that the bias term is $o(n^{-1/2})$.
		For the stochastic term, we first want to take care of the matrix $\Xi(x,h_1)$. Recall that when $x \in \mathring{\mathcal{X}}_h = [\underline{x}+h, \bar{x}-h]$, $\Xi(x,h_1)$ is equal to the identity matrix $\bm{I}$. In the proof of Theorem 4, we have shown that
		\begin{align*}
			\sup_{y \in \mathbb{R}, x \in \mathcal{X}} \left\lVert \frac{1}{nh_1} \sum_{i=1}^n \bm{s}(Y_i,X_i;y,x,h_1,h_2)\right\rVert_2 = O_p\left( \sqrt{|\log h_1| / (nh_1)} \right).
		\end{align*}
		Therefore, we have
		\begin{align*}
			\norm{ \int_{\underline{y}}^{\bar{y}} \int_{\underline{x}}^{\bar{x}} \left( \Xi(x,h_1) - \bm{I} \right) \frac{1}{nh_1} \sum_{i=1}^n \bm{s}(Y_i,X_i;y,x,h_1,h_2) }_2 = o_p(1/\sqrt{n}).
		\end{align*}
		By Theorem 4 and the assumption that $\sqrt{n}h_1 / |\log h_1| \rightarrow \infty$, we can write the stochastic term as
		\begin{align*}
			\int_{\underline{y}}^{\bar{y}} \int_{\underline{x}}^{\bar{x}} \left( \hat{\bm{\beta}}_0(y,x,h_1,h_2) - \bar{\bm{\beta}}_0(y,x,h_1,h_2) \right) dxdy = \frac{1}{n} \sum_{i=1}^n Z_i  + o_p(1/\sqrt{n})
		\end{align*}
		where 
		\begin{align*}
			Z_i & = \frac{1}{h_1 } \int_{\underline{y}}^{\bar{y}} \int_{\underline{x}}^{\bar{x}} \bm{e}_0^\top \bm{I} \bm{s}(Y_i,X_i;y,x,h_1,h_2) dxdy\\
			& = \frac{1}{h_1 } \int_{\underline{y}}^{\bar{y}} \int_{\underline{x}}^{\bar{x}} \left( K\left( \frac{y - Y_i}{h_2}  \right) - \tilde{F}(y|X_i) \right) w \left( \frac{X_i - x}{h_1} \right) dxdy.
		\end{align*}
		By the standard change of variables, we can write $Z_i$ as
		\begin{align*}
			Z_i = \int_{\underline{y}}^{\bar{y}} \int_{(X_i - \underline{x})/h_1}^{(X_i - \bar{x})/h_1} \left( K\left( \frac{y - Y_i}{h_2}  \right) - \tilde{F}(y|X_i) \right) w \left( u \right) dudy
		\end{align*}
		The random variables $\{ Z_i : 1 \leq i \leq n \}$ forms an iid triangular array. Each $Z_i$ is centered, that is, $\mathbb{E}[Z_i] = 0$. Denote the variance of $Z_i$ as $V_n$, which can be calculated based on change of variables:
		\begin{align*}
			V_n = \mathbb{E}[Z_i^2] & = \int \left( \int_{\underline{y}}^{\bar{y}} \int_{(t - \underline{x})/h_1}^{(t - \bar{x})/h_1} \left( K\left( \frac{y - s}{h_2}  \right) - \tilde{F}(y|t) \right) w \left( u \right) dudy \right)^2 f(s,t) dtds \\
			& = \int \left( \int_{\underline{y}}^{\bar{y}} \int_{-1}^{1} \left( K\left( \frac{y - s}{h_2}  \right) - \tilde{F}(y|t) \right) \mathbf{1}_{[(t - \underline{x})/h_1,(t - \bar{x})/h_1]}(u) w \left( u \right) dudy \right)^2 f(s,t) dtds
		\end{align*}
		As $n \rightarrow \infty$, we have the pointwise convergence results $K( \frac{y - s}{h_2}) \rightarrow \mathbf{1}\{s \leq y\}, \tilde{F}(y|t) \rightarrow F(y|t),$ and $\mathbf{1}_{[(t - \underline{x})/h_1,(t - \bar{x})/h_1]}(u) \rightarrow 1, u \in [-1,1]$. We know that these functions are bounded and the support of $(Y,X)$ is compact. Then by the dominated convergence theorem, we have
		\begin{align*}
			V_n & \rightarrow \int \left( \int_{\underline{y}}^{\bar{y}} \int_{-1}^{1} \left( \mathbf{1}\{s \leq y\} - F(y|t) \right) w \left( u \right) dudy \right)^2 f(s,t) dtds \\
			& = \int \left( \int \left( \mathbf{1}\{s \leq y\} - F(y|t) \right) dy \right)^2 f(s,t) dtds = V,
		\end{align*}
		where the second line follows from the fact that $w$ integrates to 1.
		It is straightforward to see that $Z_i$ is bounded, and hence any moment of $|Z_i|$ is finite. Then we can apply the Lyapnov central limit theorem (for example, Theorem 5.11 in \cite{white2001asymptotic}) to obtain that $\sum Z_i / \sqrt{n}$ converges in distribution to $N(0,V)$.
		The desired result is thus proved.

	\end{proof}

\section{Preliminary Results in Empirical Process Theory} \label{sec:ept}

\begin{lemma} \label{lm:BV-VC}
    Let $K:\mathbb{R} \rightarrow \mathbb{R}$ be a function of bounded variation. Then the following class is Euclidean: 
    \begin{align*}
        \left\{ K\left( (\cdot - x)/h \right): x \in \mathbb{R}, h>0  \right\}.
    \end{align*}
\end{lemma}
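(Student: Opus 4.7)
The plan is to reduce the general bounded variation case to the monotone case via the Jordan decomposition, and then appeal to standard results on VC subgraph classes. First, since $K$ is of bounded variation on $\mathbb{R}$, I can write $K = K_+ - K_-$, where $K_+$ and $K_-$ are bounded non-decreasing functions (with $\lVert K_\pm \rVert_\infty$ finite). Consequently, the class in the statement is contained in the pointwise difference $\mathcal{K}^+ - \mathcal{K}^-$, where
\begin{align*}
\mathcal{K}^\pm = \left\{ K_\pm\left( (\cdot - x)/h \right) : x \in \mathbb{R},\, h>0 \right\}.
\end{align*}

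Next, I would show that each $\mathcal{K}^\pm$ is a VC subgraph class with a uniformly bounded envelope $\lVert K_\pm \rVert_\infty$, and hence Euclidean. The key observation is that the collection of affine maps $\{ y \mapsto (y - x)/h : x \in \mathbb{R}, h > 0 \}$ lies in a two-dimensional vector space of functions of $y$, and is therefore a VC subgraph class of finite VC dimension (e.g., Lemma 2.6.15 in \citealt{wellner1996}). Composing with a monotone function preserves the VC subgraph property (e.g., Lemma 2.6.18(viii) in the same reference), so each $\mathcal{K}^\pm$ is VC subgraph with bounded envelope. The standard translation from VC subgraph classes with bounded envelopes to uniform covering number bounds of the form $A/\epsilon^v$ (see, e.g., Theorem 2.6.7 in \citealt{wellner1996}, or Theorem II.25 in Pollard) then yields that each $\mathcal{K}^\pm$ is Euclidean with some coefficients $(A_\pm, v_\pm)$.

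Finally, I would use closure of Euclidean classes under pointwise differences: if $\mathcal{F}$ and $\mathcal{G}$ are Euclidean with uniform covering bounds $A_\mathcal{F} \epsilon^{-v_\mathcal{F}}$ and $A_\mathcal{G} \epsilon^{-v_\mathcal{G}}$, then $\mathcal{F} - \mathcal{G}$ is Euclidean, because an $\epsilon/2$-net for each factor produces an $\epsilon$-net for the difference, giving the bound $A_\mathcal{F} A_\mathcal{G} 2^{v_\mathcal{F}+v_\mathcal{G}} \epsilon^{-(v_\mathcal{F}+v_\mathcal{G})}$. Applied to $\mathcal{K}^+$ and $\mathcal{K}^-$, this shows the original class is Euclidean, completing the proof.

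The main obstacle, in my view, is purely bookkeeping: handling the passage from the VC subgraph property to the Euclidean covering bound with a uniform envelope, and invoking closure of the Euclidean property under differences. All of the conceptual content is classical; the work lies in citing (or establishing as auxiliary lemmas in the appendix) these preservation properties and tracking the coefficients $(A,v)$ through the Jordan decomposition step.
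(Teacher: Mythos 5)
Your proposal is correct, but it takes a genuinely different route from the paper. The paper's proof is a one-line citation to Lemma 22(i) of \citet{nolan1987uprocess}, which directly states that translations and dilations of a fixed bounded-variation function form a Euclidean class. Your argument reconstructs this result from more elementary building blocks: the Jordan decomposition $K = K_+ - K_-$ into bounded monotone pieces; the observation that $\{y \mapsto (y-x)/h\}$ is a finite-dimensional (hence VC-subgraph) class; preservation of the VC-subgraph property under composition with a monotone function (Lemma 2.6.18(viii) in \citealt{wellner1996}); the passage from VC-subgraph with bounded envelope to Euclidean covering bounds (Theorem 2.6.7); and closure of Euclidean classes under pointwise differences (which is essentially the same argument as Lemma \ref{lm:sum-of-Euclidean-is-Euclidean} in the appendix). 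Your route is longer but more transparent and self-contained, and makes explicit the role of monotonicity and the decomposition step; the paper's route is a direct appeal to a specialized off-the-shelf lemma. One small bookkeeping point you implicitly handle correctly: the class $\{K((\cdot-x)/h)\}$ is a subset of the pointwise difference $\mathcal{K}^+ - \mathcal{K}^-$ (the ``diagonal'' with matched $(x,h)$), and a subset of a Euclidean class inherits the same coefficients, so no further work is needed there.
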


\begin{proof} [Proof of Lemma \ref{lm:BV-VC}]
    This is a direct application of Lemma 22(i) in \citet{nolan1987uprocess}.
\end{proof}

\begin{lemma} \label{lm:finite-dim-is-VC}
    Any uniformly bounded and finite-dimensional vector space of functions is Euclidean.
\end{lemma}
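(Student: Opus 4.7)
The plan is to reduce the claim to the classical chain of implications: a finite-dimensional vector space of functions has a subgraph class of bounded VC dimension, and VC-subgraph classes satisfy polynomial uniform covering-number bounds, which---given the uniform boundedness hypothesis---is exactly the Euclidean property as defined earlier in the paper. Let $\mathcal{G}$ be a $k$-dimensional vector space of functions on a domain $S$, with $|g|\leq M$ for all $g\in\mathcal{G}$.

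First I would bound the VC dimension of the subgraph collection $\mathcal{C}=\{\{(s,t)\in S\times\mathbb{R}:t<g(s)\}:g\in\mathcal{G}\}$. Each element of $\mathcal{C}$ is the positivity set of a function in the $(k+1)$-dimensional vector space spanned by $\mathcal{G}$ together with the coordinate map $(s,t)\mapsto -t$. Invoking the standard bound on the number of sign patterns realized by a finite-dimensional function space (for example, Lemma 18 of \cite{nolan1987uprocess} or Lemma 2.6.15 of \cite{wellner1996}), I would conclude that $\mathcal{C}$ has VC dimension at most $k+1$.

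Next I would apply the uniform entropy bound for VC-subgraph classes with a constant envelope, such as Theorem 2.6.7 of \cite{wellner1996}. Taking the envelope $F\equiv M$ and VC index $k+1$ yields universal constants so that, for every probability measure $P$ on $S$ and every $\eta\in(0,1]$,
\[
N(\mathcal{G},L_2(P),\eta M)\leq K(k+1)(16e)^{k+1}\eta^{-2k}.
\]
Substituting $\epsilon=\eta M$ and collecting constants then delivers $N(\mathcal{G},L_2(P),\epsilon)\leq A\epsilon^{-v}$ for $\epsilon\in(0,M]$ with $v=2k$ and some $A>0$ depending on $k$ and $M$. For $\epsilon\in(M,1]$ (vacuous if $M\geq 1$) the covering number is trivially $1$, so enlarging $A$ if necessary extends the bound to all $\epsilon\in(0,1]$, which is precisely the Euclidean property with coefficients $(A,v)$.

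I do not anticipate a genuine obstacle, since each step is a direct invocation of a classical result. The only bookkeeping concern is matching conventions: the paper defines the Euclidean property via $L_2(P)$-covering on uniformly bounded classes without an explicit envelope, which corresponds to taking the constant envelope $M$ in the van der Vaart--Wellner formulation, after which the rescaling $\epsilon=\eta M$ is immediate.
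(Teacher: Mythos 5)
Your proof is correct and takes essentially the same route as the paper: the paper's proof is a one-line citation to Lemma 2.6.15 (finite-dimensional vector spaces are VC-subgraph) and Theorem 2.6.7 (uniform entropy bound for VC classes) of \citet{wellner1996}, which is precisely the chain you spell out. The only difference is that you fill in the bookkeeping of matching the constant envelope $M$ to the paper's envelope-free definition of Euclidean, which the paper leaves implicit.
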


\begin{proof} [Proof of Lemma \ref{lm:finite-dim-is-VC}]
    This follows from Lemma 2.6.15 and Theorem 2.6.7 in \citet{wellner1996}.
\end{proof}

\iffalse
\begin{lemma} \label{lm:translations-of-Euclidean-is-Euclidean}
    Let $\mathcal{G}$ be a class of functions that are uniformly bounded and Euclidean and $[a_0,a_1] \subset \mathbb{R}$ a bounded set. Then the class of functions $\{f + a : f \in \mathcal{G}, a \in [a_0,a_1] \}$ is also uniformly bounded and Euclidean.
\end{lemma}

\begin{proof} [Proof of Lemma \ref{lm:translations-of-Euclidean-is-Euclidean}]
    Without loss of generality, let $a_0 = 0$ and $a_1 = 1$. By definition, there exist positive constants $A$ and $v$ such that for every measure $P$ and every $\varepsilon \in (0,1]$, $N(\mathcal{G}_1,P,\epsilon) \leq A_1 / \epsilon^{v_1}$. For $\epsilon >0 $, we let $\{\tilde{f}_{j}: 1 \leq j \leq J \}$ be an $\epsilon/2$-covering of $\mathcal{G}$, where $J = N(\mathcal{G}_1,P,\epsilon/2) \leq A_1 / \epsilon^{v_1}$. Then consider the following class of functions:
    \begin{align*}
        \{ f_j + k (\epsilon/2): 1 \leq j \leq J, 0 \leq k \leq [2/\epsilon]+1 \}.
    \end{align*}
    The above set form an $\epsilon$-covering of $\{f + a : f \in \mathcal{G}, a \in [a_0,a_1] \}$. This is because for any $f$ in the $\epsilon/2$-neighborhood of $f_j$, $f+a$ is in the $\epsilon$-neighborhood of $f_j + [2a/\epsilon](\epsilon/2)$. This proves the result.
\end{proof}
\fi

\begin{lemma} \label{lm:conditional-expectation-preserves-Euclidean}
    Let $\mathcal{G}$ be a uniformly bounded Euclidean class with coefficients $(A,v)$. Then the class $\{ \mathbb{E}[g(\cdot) \mid X] : g \in \mathcal{G} \}$ is also uniformly bounded and Euclidean with coefficients $(A,v)$.
\end{lemma}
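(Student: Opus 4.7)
The plan is to exploit the fact that conditional expectation is an $L_2$-contraction: every cover of $\mathcal{G}$ can be pushed forward to a cover of the conditional-expectation class of at most the same cardinality, so no new constants are needed. The uniform bound is immediate, since if $|g| \leq M$ for all $g \in \mathcal{G}$, then $|\mathbb{E}[g(\cdot)\mid X]| \leq \mathbb{E}[|g(\cdot)| \mid X] \leq M$ almost surely, giving the same sup bound for the new class.

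For the Euclidean property, I would fix an arbitrary probability measure $P$ on the sample space of $X$ and bound $N(\{\mathbb{E}[g(\cdot)\mid X]:g\in\mathcal{G}\}, P, \epsilon)$. The key construction is to let $Q(\cdot\mid x)$ be a regular version of the conditional distribution of the remaining coordinates given $X=x$ (the one implicit in $\mathbb{E}[g(\cdot)\mid X]$), and then to form the auxiliary joint measure $P'(dy,dx) = Q(dy \mid x)P(dx)$ on the domain of $g$. Because $P'$ has marginal $P$ on $X$ and the same conditional $(Y\mid X)$ used to define $\mathbb{E}[g(\cdot)\mid X]$, conditional Jensen's inequality yields, for all $g_1,g_2\in\mathcal{G}$,
\begin{equation*}
\|\mathbb{E}[g_1\mid X] - \mathbb{E}[g_2\mid X]\|_{L_2(P)}^2 \leq \int \mathbb{E}\bigl[(g_1-g_2)^2 \bigm| X=x\bigr]\, P(dx) = \|g_1 - g_2\|_{L_2(P')}^2.
\end{equation*}
Since $\mathcal{G}$ is Euclidean with coefficients $(A,v)$, it admits an $\epsilon$-cover $\{g_1,\ldots,g_N\}$ in $L_2(P')$ with $N\leq A/\epsilon^v$, and the pushed-forward collection $\{\mathbb{E}[g_i\mid X]\}_{i=1}^{N}$ is then an $\epsilon$-cover of the target class in $L_2(P)$, with centers again in the target class. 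Since $P$ was arbitrary, this gives the claimed coefficients $(A,v)$.

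The main obstacle is essentially definitional rather than computational: the paper's notion of Euclidean requires cover centers to lie inside the class itself, so a naive strategy (for instance, approximating $\mathbb{E}[g\mid X]$ by simple functions of $x$) would inflate the constants $(A,v)$ or fail the center condition outright. The contraction argument avoids this because it pushes cover centers in $\mathcal{G}$ to cover centers in $\{\mathbb{E}[g\mid X] : g\in\mathcal{G}\}$ in a one-to-one fashion. A secondary technical point is the existence of the regular conditional distribution used to build $P'$, but this holds under the standard measurability assumptions on the underlying Polish spaces and requires no further work.
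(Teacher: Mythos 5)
Your proof is correct and uses the same core idea as the paper, which disposes of the result in a single line by observing that conditional expectation is an $L_2$-projection and therefore reduces norms. You do add a genuinely useful detail the paper leaves implicit: since the Euclidean property quantifies over arbitrary probability measures $P$ on the domain of the new class, you glue $P$ to the true conditional law of $Y$ given $X$ to form $P'$, so that the contraction inequality $\lVert \mathbb{E}[g_1\mid X]-\mathbb{E}[g_2\mid X]\rVert_{L_2(P)} \le \lVert g_1-g_2\rVert_{L_2(P')}$ applies literally and the pushed-forward cover centers $\mathbb{E}[g_i\mid X]$ remain inside the target class.
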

\begin{proof} [Proof of Lemma \ref{lm:conditional-expectation-preserves-Euclidean}]
    This follows from the fact that the conditional expectation is a projection in the Hilbert space $L_2(P)$ and hence reduces the norm.
\end{proof}

\begin{lemma} \label{lm:sum-of-Euclidean-is-Euclidean}
    Let $\mathcal{G}_1$ and $\mathcal{G}_2$ be two classes of functions that are uniformly bounded and Euclidean with coefficients $(A_1,v_1)$ and $(A_2,v_2)$ respectively. Then the class $\mathcal{G}_1 \oplus \mathcal{G}_2 = \{g_1 + g_2:g_1 \in \mathcal{G}_1, g_2 \in \mathcal{G}_2\}$ is also uniformly bounded and Euclidean with coefficients $(A_1A_2A_1 A_2 2^{v_1 + v_2},v_1 + v_2)$.
\end{lemma}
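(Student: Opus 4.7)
The plan is to handle uniform boundedness via the triangle inequality in the sup norm, and to produce an $\epsilon$-covering of $\mathcal{G}_1 \oplus \mathcal{G}_2$ by taking pairwise sums of smaller coverings of the two factors. If $M_1$ and $M_2$ are the respective uniform bounds on $\mathcal{G}_1$ and $\mathcal{G}_2$, then every element $g_1 + g_2 \in \mathcal{G}_1 \oplus \mathcal{G}_2$ satisfies $|g_1 + g_2| \leq M_1 + M_2$, so the sum class is uniformly bounded by $M_1 + M_2$.

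For the Euclidean bound, I would fix an arbitrary probability measure $P$ and $\epsilon \in (0, 1]$. Since $\epsilon/2 \in (0, 1]$ as well, the Euclidean hypothesis gives $L_2(P)$-nets $\{g_1^{(i)}\}_{i=1}^{N_1} \subset \mathcal{G}_1$ and $\{g_2^{(j)}\}_{j=1}^{N_2} \subset \mathcal{G}_2$ of radius $\epsilon/2$, with $N_1 \leq A_1 (2/\epsilon)^{v_1}$ and $N_2 \leq A_2 (2/\epsilon)^{v_2}$. The key observation is that for any $g_1 + g_2 \in \mathcal{G}_1 \oplus \mathcal{G}_2$, choosing centers $g_1^{(i)}$ and $g_2^{(j)}$ with $\norm{g_\ell - g_\ell^{(\cdot)}}_{L_2(P)} \leq \epsilon/2$ and applying the triangle inequality in $L_2(P)$ yields
\[
    \norm{(g_1 + g_2) - (g_1^{(i)} + g_2^{(j)})}_{L_2(P)} \leq \epsilon.
\]
Hence the $N_1 N_2$ pairwise sums form an $\epsilon$-net for $\mathcal{G}_1 \oplus \mathcal{G}_2$ in $L_2(P)$, so
\[
    N(\mathcal{G}_1 \oplus \mathcal{G}_2, P, \epsilon) \leq N_1 N_2 \leq \frac{A_1 A_2 \, 2^{v_1 + v_2}}{\epsilon^{v_1 + v_2}}.
\]

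Combining these two pieces delivers the stated coefficients (with the product of constants $A_1 A_2 \cdot 2^{v_1+v_2}$ and exponent $v_1 + v_2$). I do not anticipate any real obstacle here: the argument is the standard product-net construction for metric entropy, and the only detail worth verifying is that $\epsilon/2 \in (0,1]$ whenever $\epsilon \in (0,1]$, so the Euclidean assumptions on $\mathcal{G}_1$ and $\mathcal{G}_2$ can indeed be invoked at radius $\epsilon/2$.
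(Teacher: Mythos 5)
Your argument is correct and is essentially the paper's own proof: the paper simply cites the covering-number inequality $N(\mathcal{G}_1 \oplus \mathcal{G}_2, L_2(P),\epsilon) \leq N(\mathcal{G}_1, L_2(P),\epsilon/2)\, N(\mathcal{G}_2, L_2(P),\epsilon/2)$ from Andrews (1994, inequality (A.4)) and then plugs in the Euclidean bounds, whereas you derive that inequality from scratch via the product-net construction and the triangle inequality in $L_2(P)$. Both routes land on the same constant $A_1 A_2 2^{v_1+v_2}$ and exponent $v_1+v_2$ (the extra factor ``$A_1 A_2$'' appearing in the lemma's stated coefficient is evidently a typographical duplication in the paper's statement).
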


\begin{proof}
    By Inequalities (A.4) in \citet{ANDREWS1994empirical}, we have
    \begin{align*}
        N(\mathcal{G}_1 \oplus \mathcal{G}_2, L_2(P),\epsilon) & \leq N(\mathcal{G}_1 , L_2(P),\epsilon/2) N( \mathcal{G}_2, L_2(P),\epsilon/2) \\
        & \leq A_1 (2/\epsilon)^{v_1} A_2 (2/\epsilon)^{v_2} = A_1 A_2 2^{v_1 + v_2} / \epsilon^{v_1 + v_2}.
    \end{align*}
\end{proof}

\begin{lemma} \label{lm:product-of-Euclidean-is-Euclidean}
    Let $\mathcal{G}_1$ be a class of functions that is uniformly bounded by $M_1$ and Euclidean with coefficients $(A_1,v_1)$ and $\mathcal{G}_2$ a class of functions that is uniformly bounded by $M_2$ and Euclidean with coefficients $(A_2,v_2)$. Then the class $\mathcal{G}_1\mathcal{G}_2 = \{g_1\cdot g_2:g_1 \in \mathcal{G}_1, g_2 \in \mathcal{G}_2\}$ is uniformly bounded by $M_1M_2$ and Euclidean with coefficients $(A_1A_2(M_1 + M_2)^{v_1 + v_2},v_1 + v_2)$.
\end{lemma}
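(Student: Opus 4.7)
The plan is to verify the uniform bound trivially and then control the $L_2(P)$ covering number of the product class by leveraging the product rule and uniform bounds, following the same style of argument used in Lemma~\ref{lm:sum-of-Euclidean-is-Euclidean}. The uniform bound $|g_1 g_2| \leq M_1 M_2$ for all $g_1 \in \mathcal{G}_1, g_2 \in \mathcal{G}_2$ is immediate from the hypotheses, so the only substantive task is the covering number bound.

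The key pointwise inequality is
\begin{align*}
    |g_1 g_2 - g_1' g_2'| \leq |g_1| \cdot |g_2 - g_2'| + |g_2'| \cdot |g_1 - g_1'| \leq M_1 |g_2 - g_2'| + M_2 |g_1 - g_1'|,
\end{align*}
obtained by adding and subtracting $g_1 g_2'$ and using the uniform bounds. Integrating the square and applying the triangle inequality in $L_2(P)$ gives
\begin{align*}
    \norm{g_1 g_2 - g_1' g_2'}_{L_2(P)} \leq M_1 \norm{g_2 - g_2'}_{L_2(P)} + M_2 \norm{g_1 - g_1'}_{L_2(P)}.
\end{align*}

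Next, for any probability measure $P$ and any $\epsilon \in (0,1]$, I would pick an $\epsilon/(M_1+M_2)$-cover of $\mathcal{G}_1$ (of cardinality at most $A_1 ((M_1+M_2)/\epsilon)^{v_1}$ by the Euclidean hypothesis on $\mathcal{G}_1$) and an $\epsilon/(M_1+M_2)$-cover of $\mathcal{G}_2$ (of cardinality at most $A_2 ((M_1+M_2)/\epsilon)^{v_2}$). Taking pointwise products of the two covers yields a family of functions, and by the displayed $L_2(P)$ inequality every element of $\mathcal{G}_1 \mathcal{G}_2$ lies within $L_2(P)$-distance $M_1 \cdot \epsilon/(M_1+M_2) + M_2 \cdot \epsilon/(M_1+M_2) = \epsilon$ of some element in this product family. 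Hence
\begin{align*}
    N(\mathcal{G}_1\mathcal{G}_2, P, \epsilon) \leq A_1 A_2 (M_1+M_2)^{v_1+v_2} / \epsilon^{v_1+v_2},
\end{align*}
which is the claimed Euclidean bound with coefficients $(A_1 A_2 (M_1+M_2)^{v_1+v_2}, v_1+v_2)$.

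There is no real obstacle here; the proof is a direct analogue of that of Lemma~\ref{lm:sum-of-Euclidean-is-Euclidean}, with the only twist being the appearance of the uniform bounds $M_1, M_2$ through the product-rule inequality, which is what forces the $(M_1+M_2)^{v_1+v_2}$ factor in the coefficient. The one mildly delicate point to mention is the choice of splitting: allocating radii $\epsilon/(M_1+M_2)$ to both covers (rather than, say, $\epsilon/(2M_2)$ and $\epsilon/(2M_1)$) is what produces the clean symmetric constant $(M_1+M_2)^{v_1+v_2}$ as stated, so I would highlight that choice explicitly in the writeup.
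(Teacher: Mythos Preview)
Your proposal is correct and follows essentially the same approach as the paper's proof: both use the product-rule decomposition $g_1 g_2 - g_1' g_2' = g_1(g_2 - g_2') + g_2'(g_1 - g_1')$ together with the uniform bounds $M_1, M_2$ to pass from covers of $\mathcal{G}_1$ and $\mathcal{G}_2$ at radius $\epsilon/(M_1+M_2)$ to an $\epsilon$-cover of $\mathcal{G}_1\mathcal{G}_2$. The only cosmetic difference is that the paper first shows an $\epsilon$-cover of each factor yields a $(M_1+M_2)\epsilon$-cover of the product and then rescales, whereas you pick the rescaled radius from the outset.
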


\begin{proof} [Proof of Lemma \ref{lm:product-of-Euclidean-is-Euclidean}]
    The proof is similar to that of Theorem 3 in \citet{ANDREWS1994empirical}. By definition, for every measure $P$ and every $\epsilon \in (0,1]$, $N(\mathcal{G}_1,P,\epsilon) \leq A_1 / \epsilon^{v_1}$ and $N(\mathcal{G}_2,P,\epsilon) \leq A_2 / \epsilon^{v_2}$. We can construct $\{\tilde{g}_{1,j_1}: 1 \leq j_1 \leq J_1\}$ and $\{\tilde{g}_{2,j_2}: 1 \leq j_2 \leq J_2\}$ to be the $\epsilon$-covering of $\mathcal{G}_1$ and $\mathcal{G}_2$, respectively, where $J_1 = N(\mathcal{G}_1,P,\epsilon)$ and $N(\mathcal{G}_1,P,\epsilon)$. For any $g_1 \in \mathcal{G}$ and $g_2 \in \mathcal{G}_2$, suppose $g_1$ is in the $\epsilon$-neighborhood of $\tilde{g}_{1,j_{1,*}}$ and $g_2$ is in the $\epsilon$-neighborhood of $\tilde{g}_{2,j_{2,*}}$. Then the $L_2(P)$ distance between $g_1g_2$ and $\tilde{g}_{1,j_{1,*}}\tilde{g}_{2,j_{2,*}}$ is 
    \begin{align*}
        \left \lVert g_1 g_2 - \tilde{g}_{1,j_{1,*}}\tilde{g}_{2,j_{2,*}} \right \rVert_{L_2(P)} & \leq \left \lVert g_1 g_2 - g_1\tilde{g}_{2,j_{2,*}}\right \rVert_{L_2(P)} + \left \lVert g_1 \tilde{g}_{2,j_{2,*}}-  \tilde{g}_{1,j_{1,*}}\tilde{g}_{2,j_{2,*}} \right \rVert_{L_2(P)} \\
        & \leq M_1  \left \lVert g_2 - \tilde{g}_{2,j_{2,*}}\right \rVert_{L_2(P)} + M_2  \left \lVert g_1 - \tilde{g}_{1,j_{1,*}}\right \rVert_{L_2(P)} \leq (M_1 + M_2) \epsilon.
    \end{align*}
    This means that $\{\tilde{g}_{1,j_1}\tilde{g}_{2,j_2}: 1 \leq j_1 \leq J_1, 1 \leq j_2 \leq J_2\}$ forms a $(M_1 + M_2)\epsilon$-cover of $\mathcal{G}_1\mathcal{G}_2$. Therefore,
    \begin{align*}
        N(\mathcal{G}_1\mathcal{G}_2,L_2(P),\epsilon) & \leq N(\mathcal{G}_1,L_2(P),\epsilon/(M_1 + M_2)) N(\mathcal{G}_2,L_2(P),\epsilon/(M_1 + M_2)) \\
        & \leq A_1 A_2 (M_1 + M_2)^{v_1 + v_2}/ \epsilon^{v_1 + v_2}.
    \end{align*}
    This proves the result.
\end{proof}

\bibliographystyle{apalike}
\bibliography{references.bib}

\end{document}